\theoremstyle{plain}
\newtheorem{thm}{Theorem}[section]
\newtheorem{cor}[thm]{Corollary}
\newtheorem{prop}[thm]{Proposition}
\theoremstyle{definition}
\newtheorem{defn}[thm]{Definition}
\theoremstyle{definition}
\newtheorem{example}[thm]{Example}
\theoremstyle{remark}
\newtheorem{rem}[thm]{Remark}
\newtheorem*{rem*}{Remark}
\begin{document}
\title{Risk-Neutral Pricing and Hedging of In-Play Football Bets}
\date{2015-06-23} 
\author{
	\name{
		Peter Divos$^\ast{}^\dagger$
		\thanks{{\em{Correspondence Address}}:
			Department of Computer Science, University College London, Gower Street, London W1C 6BT, United Kingdom},
		Sebastian del Bano Rollin$^{\ddagger}$,
		Zsolt Bihari$^\S$
		\&
		Tomaso Aste$^\ast{}^\dagger$
	}
	\affil{
	$^\ast$Department of Computer Science, University College London, Gower Street, London W1C 6BT, United Kingdom,  
	$^{\dagger}$Systemic Risk Centre, London School of Economics and Political Sciences, London WC2A 2AE, United Kingdom,
	$^{\ddagger}$School of Mathematical Sciences, Queen Mary University of London, Mile End Road, London E1 4NS, United Kingdom
	$^\S$Department of Finance, Corvinus University of Budapest, 1093 Budapest, Fovam ter 8, Hungary}
}

\maketitle

\begin{abstract}
A risk-neutral valuation framework is developed for pricing and hedging
in-play football bets based on modelling scores by independent Poisson
processes with constant intensities. The Fundamental Theorems of Asset
Pricing are applied to this set-up which enables us to derive novel
arbitrage-free valuation formul\ae\ for contracts currently traded
in the market. We also describe how to calibrate the model to the
market and how trades can be replicated and hedged.
\end{abstract}
\begin{keywords}
Asset pricing, hedging, football, betting
\end{keywords}

\section{Introduction}\label{sec:intro}
In-play football bets are traded live during a football game.
The prices of these bets are driven by the goals scored in the underlying game
in a way such that prices move smoothly between goals and jump to
a new level at times when goals are scored. This is similar to financial markets
where the price of an option changes according to the price changes
of the underlying instrument. We show that the Fundamental Theorems
of Asset Pricing can be applied to the in-play football betting market
and that these bets can be priced in the risk-neutral framework.

Distribution of final scores of football games has been studied by
several authors. In particular, \citet{maher1982modelling} found that an independent
Poisson distribution gives a reasonably accurate description of football
scores and achieved further improvements by applying a bivariate Poisson
distribution. This was further developed by \citet{dixon1997modelling}
who proposed a model in which the final scores of the two teams are
not independent, but the marginal distributions of each team's scores
still follow standard Poisson distributions.

Distribution of in-play goal times has been studied by \citet{dixon1998birth}
who applied a state-dependent Poisson model where the goal intensities
of the teams depend on the current score and time. The model also
accounts for other factors such as home effect and injury time. The
standard Poisson model has been applied by \citet{fitt2005valuation}
to develop analytical valuation formulae for in-play spread bets on
goals and also on corners. A stochastic intensity model has been suggested by
\citet{jottreau2009cir} where the goals are driven by Poisson processes
with intensities that are stochastic,
in particular driven by a Cox-Ingerson-Ross process.
\citet{vecer2009estimating} have shown that
in-play football bets may have additional sensitivities on the top of the
standard Poisson model, for instance sensitivities to red cards.

The Fundamental Theorems of Asset Pricing form the basis of the risk-neutral
framework of financial mathematics and derivative pricing
and have been developed by several authors,
including \citet{cox1976valuation}, \citet{harrison1979martingales},
\citet{harrison1981martingales}, \citet{harrison1983stochastic},
\citet{huang1985information}, \citet{duffie1988security} and \citet{back1991fundamental}.
The first fundamental theorem states that a market is arbitrage free if and only if there
exists a probability measure under which the underlying asset prices
are martingales. The second fundamental theorem states that the market
is complete, (that is, any derivative product of the underlying assets
can be dynamically replicated) if and only if the martingale measure
is unique. 

In this paper we use independent standard time-homogeneous Poisson
processes to model the scores of the two teams. We construct a market of three
underlying assets and show that within this model a unique martingale
measure exists and therefore the market of in-play football bets is
arbitrage-free and complete. Then we demonstrate calibration and replication
performance using market data.

The structure of this paper is the following. Section \ref{sec:inplay} contains a
general overview of in-play football betting and an overview
of the data set. Section \ref{sec:Maths} defines the formal model and contains pricing formulae for
Arrow-Debreu securities among others. In Section \ref{sub:Calibration} we calibrate the model to
historical market quotes of in-play bets and in Section \ref{sec:nextgoal} we use the
same data to show that Next Goal bets are natural hedging instruments
that can be used to build a replicating portfolio to match the values of other bets,
in particular the liquidly traded Match Odds bets.
The Appendix reports analytical pricing formulae for some of the most liquidly traded bets.

\section{In-Play Football Betting}\label{sec:inplay}
In traditional football betting, also known as pre-game or fixed odds
betting, bets are placed before the beginning of the game. In-play
football betting enables bettors to place bets on the outcome of a
game after it started. The main difference is that during in-play
betting, as the game progresses and as the teams score goals, the
chances of certain outcomes jump to new levels and so do the odds
of the bets. Prices move smoothly between goals and jump once a goal is scored.
In-play betting became increasingly popular in recent years.
For instance, \citet{inplaytracker2013}
recently reported that for one particular bookmaker (Unibet) in-play
betting revenues exceeded pre-game betting revenues by 2013Q2 as shown
in Figure \ref{fig:unibet}.

\begin{figure}[t]
\begin{center}
\includegraphics[width=0.55\paperwidth]{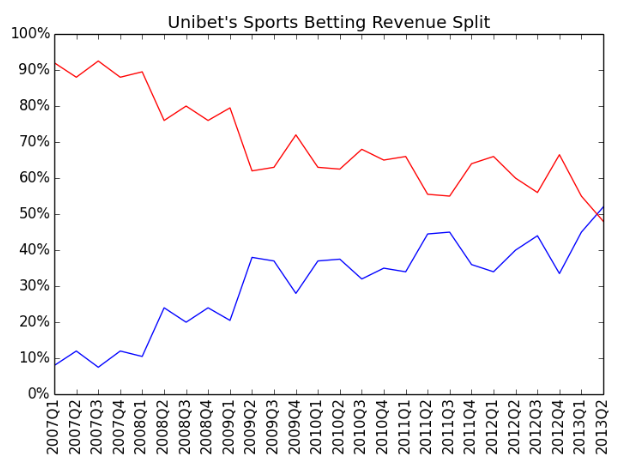}
\end{center}
\caption{\label{fig:unibet}Revenue distribution of one particular bookmaker's
(Unibet) football betting revenues between In-Play and Pre-Game football
betting.}
\end{figure}

There are two main styles of in-play betting: odds betting and spread
betting. In odds betting, the events offered are similar to digital
options in the sense that the bettor wins a certain amount if the
event happens and loses a certain amount otherwise. Typical odds
bets are whether one team wins the game, whether the total number
of goals is above a certain number or whether the next goal is scored
by the home team. In spread betting, the bets offered are such that
the bettor can win or lose an arbitrary amount. A typical example
is a bet called ``total goal minutes'' which pays the bettor the
sum of the minute time of each goal. In this paper we focus on odds
betting, but most of the results can also be applied to spread betting.
A study of spread betting containing analytical pricing formulae for
various spread bets was published by \citet{fitt2005valuation}.

In-play betting offers various types of events such as total goals,
home and away goals, individual player goals, cards, corners, injuries
and other events. This paper focuses on bets related to goal events
only. 

Throughout the paper we refer to the value $X_{t}$ of a bet as the
price at which the bet can be bought or sold at time $t$ assuming
that the bet pays a fixed amount of 1 unit in case it wins and zero
otherwise. This is a convenient notation from a mathematical point
of view, however it is worth noting that different conventions are
used for indicating prices in betting markets. The two most popular
conventions are called fractional odds and decimal odds. Both of these
conventions rely on the assumption that the bettor wagers a fixed
stake when the bet is placed and enjoys a payoff in case the bet wins
or no payoff in case it loses. Fractional odds is the net payoff
of the bet in case the bet wins (that is, payoff minus stake), divided
by the stake. Decimal odds is the total payoff of the bet in case
the bet wins, divided by the stake. Therefore, the value of a bet
$X_{t}$ is always equal to the reciprocal of the decimal odds which
is equal to the reciprocal of fractional odds plus one, formally:
\begin{equation}
X_{t}=\frac{1}{\it{Decimal}_{t}}=\frac{1}{\it{Fractional}_{t}+1},
\end{equation}
where $\it{Decimal}_{t}$ denotes decimal and $\it{Fractional}_{t}$ denotes fractional odds.
Most of the market data we used was originally represented as decimal
odds, but they were converted to bet values using the above formula
for all the figures and for the underlying calculations in this paper.

It is also worth noting that bets can be bought or sold freely during
the game. This includes going short which is referred to as lay betting.
Mathematically this means that the amount held can be a negative number.

In-play bets can be purchased from retail bookmakers
at a price offered by the bookmaker, but can also be traded
on centralized marketplaces where the exchange merely matches orders of participants
trading with each other through a limit order book and keeps a deposit from each party
to cover potential losses.

\subsection{An example game}\label{sec:examplegame}
In order to demonstrate our results we selected the Portugal
vs. Netherlands game from the UEFA Euro 2012 Championship which was
played on the 22nd of June 2012. The reason for selecting this particular
game is that the game had a rather complex unfolding with Netherlands
scoring the first goal, but then Portugal taking the lead in the second
half and finally winning the game. This made the odds jump several times
during the game which makes it a good candidate for demonstrating how
the model performs in an extreme situation. The number of goals as
a function of game time is shown in Figure \ref{fig:Number-of-goals}.

Figures \ref{fig:Match-Odds-values.} and \ref{fig:Over-Under-values.}
show market values of two bet types traded on a betting
market called Betfair: Match Odds and Over-Under. Match Odds contains
three bets: home team winning the game, away team winning the game
and the draw. Over-Under contains bets on the total number of goals
where Under X.5 is a bet that pays off if the total number of goals
is equal or less than X. The dashed lines show the best buy and sell
offers on the market while the continuous lines show the calibrated
model values (see Section \ref{sub:Calibration}).

In case of Match Odds, the value of the bet for Netherlands winning
the game jumped after Netherlands scored the first goal. When the
scores became even after Portugal scored a goal, the value of the
Draw bet jumped up and when Portugal took the lead by scoring the
third goal, the value of the bet for Portugal winning the game jumped
up. Finally, by the end of the game the value of the bet for Portugal
winning the game converged to 1 and the value of the other bets went
to zero.

In case of the Over-Under bets, trading ceased for the Under 0.5 bet
after the first goal when the value of this bet jumped to zero. By the end
of the game, the value of the Under 3.5, 4.5, 5.5, 6.5 and 7.5 bets
reached 1 because the total number of goals was actually 3 and
the values of the Under 0.5, 1.5 and 2.5 bets went to zero.

\begin{figure}[t]
\begin{center}
\includegraphics[width=0.55\paperwidth]{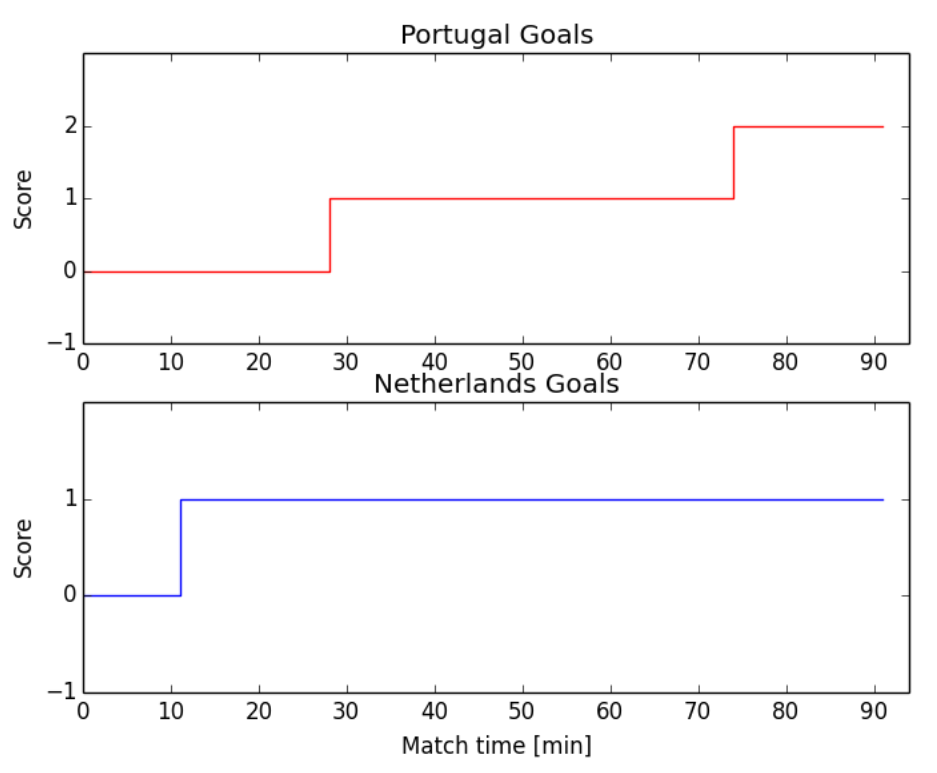}
\end{center}
\caption{\label{fig:Number-of-goals}Scores of the two teams during the Portugal
vs. Netherlands game on the 22nd of June, 2012. The half time result
was 1-1 and the final result was a 2-1 win for Portugal.}
\end{figure}

\begin{figure}[t]
\begin{centering}
\includegraphics[width=0.55\paperwidth]{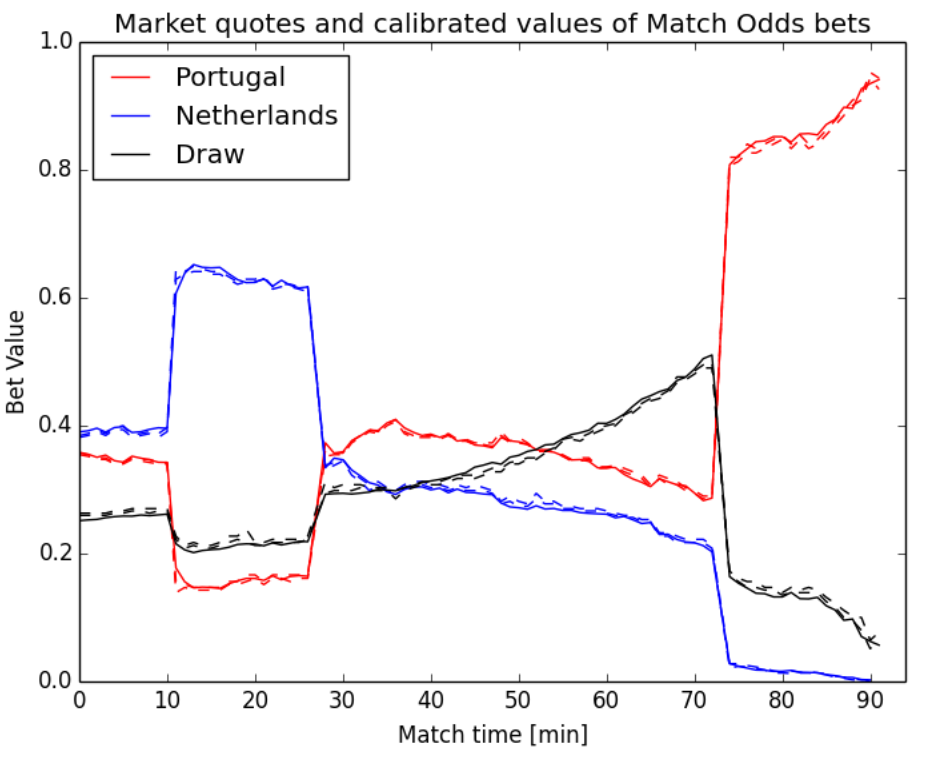}
\par\end{centering}
\protect\caption{\label{fig:Match-Odds-values.}Values of the three Match Odds bets
during the game: Draw (black), Portugal Win (red), Netherlands Win
(blue). Dashed lines represent the best market buy and sell offers
while the continuous lines represent the calibrated model values.
Note that the value of the Netherlands Win bet jumps up after the first
goal because the chance for Netherlands winning the game suddenly
increased. It jumped down for similar reasons when Portugal scored
it's first goal and at the same time the value of the Portugal Win
and Draw bets jumped up. By the end of the game, because Portugal
actually won the game, the value of the Portugal Win bet reached 1
while both other bets became worthless.}
\end{figure}

\begin{figure}[t]
\begin{centering}
\includegraphics[width=0.55\paperwidth]{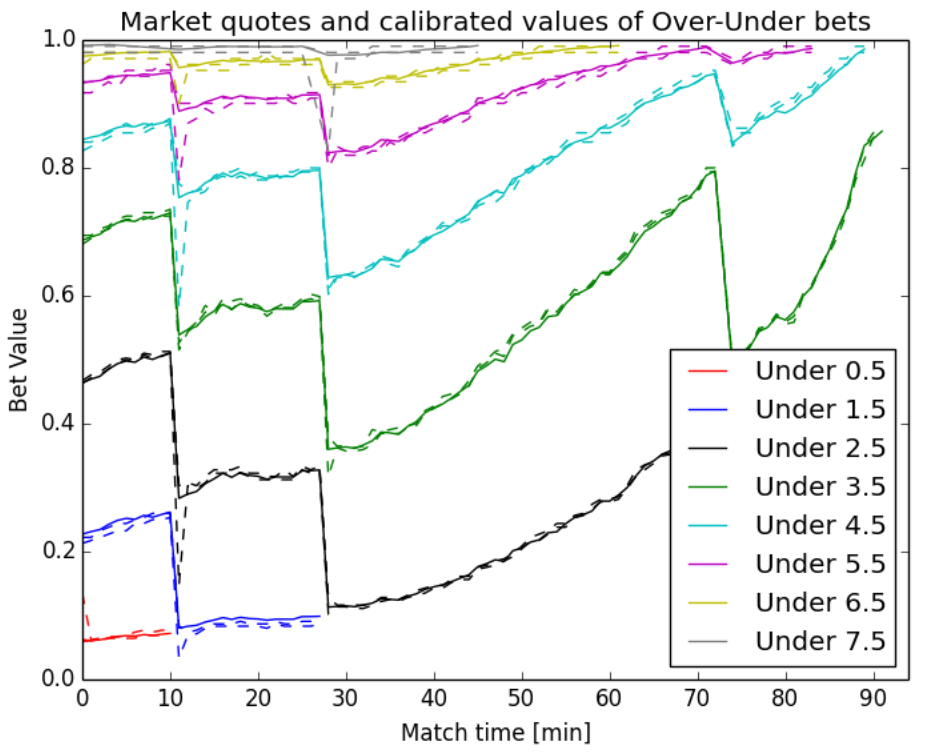}
\par\end{centering}
\protect\caption{\label{fig:Over-Under-values.}Values of Over/Under bets during the
game. Under X.5 is a bet that pays off in case the total number of
goals by the end of the game is below or equal to X. Marked lines
represent the calibrated model prices while the grey bands show the
best market buy and sell offers. Note that after the first goal trading
in the Under 0.5 bet ceased and it became worthless. By the end of
the game when the total number of goals was 3, all the bets up until
Under 2.5 became worthless while the Under 3.5 and higher bets reached
a value of 1.}
\end{figure}

\section{Mathematical framework}\label{sec:Maths}
In this section we present a risk-neutral valuation framework for in-play
football betting. To do so we follow the financial mathematical approach,
in which we start by assuming a probability space, then identify a
market of underlying tradable assets and postulate a model for the
dynamics of these assets. We show that the first and second fundamental
theorems of asset pricing apply to this market, that is the market
is arbitrage-free and complete which means that all derivatives can
be replicated by taking a dynamic position in the underlying assets.

In classical finance, the distinction between the underlying asset
(for example a stock) and a derivative (for example an option
on the stock) is natural. This is not the case in football betting; there is
no such clear distinction between underlying and derivative assets because all bets are made on the scores,
and the score process itself is not a tradable asset. In order to be able to apply the
Fundamental Theorems of Asset Pricing we need to artificially introduce underlying assets
and define the model by postulating a price dynamics for these assets in the physical measure.
It is also desirable to chose underlying assets that have a simple enough
price dynamics so that developing the replicating portfolio becomes as straightforward as possible.
For these reasons, the two underlying assets of our choice are assets that at the end of the game pay out the number of goals scored
by the home and away teams, respectively. It is important to note that these assets are not traded in practice
and the choice therefore seems unnatural. However, these underlying assets can be statically replicated from
Arrow-Debreu securities that are referred to as Correct Score bets in football betting and are traded in practice.
Furthermore, towards the end of the Section \ref{sec:riskneutralpricing} we arrive at Proposition \ref{prop:replication-from-anything} which
states that any two linearly independent bets can be used as hedging instruments. Therefore the choice of the underlying assets is practically
irrelevant and only serves a technical purpose. This result is applied in Section
\ref{sec:nextgoal} where Next Goal bets are used as natural hedging instruments.


\subsection{Setup}
Let us consider a probability space $\left(\Omega,\mathcal{F},\mathbb{P}\right)$
that carries two independent Poisson processes $N_{t}^{1}$, $N_{t}^{2}$
with respective intensities $\mu_{1}$, $\mu_{2}$ and the filtration
$\left(\mathcal{F}_{t}\right)_{t\in\left[0,T\right]}$ generated by
these processes. Let time $t=0$ denote the beginning and $t=T$ the
end of the game. The Poisson processes represent the number of goals
scored by the teams, the superscript $1$ refers to the home and $2$
refers to the away team. This notation is used throughout, the distinction
between superscripts and exponents will always be clear from the context.
The probability measure $\mathbb{P}$ is the real-world or physical
probability measure. 

We assume that there exists a liquid market where three assets can
be traded continuously with no transaction costs or any restrictions
on short selling or borrowing. The first asset $B_{t}$ is a risk-free
bond that bears no interests, an assumption that is motivated by the
relatively short time frame of a football game. The second and third
assets $S_{t}^{1}$ and $S_{t}^{2}$ are such that their values at
the end of the game are equal to the number of goals scored by the
home and away teams, respectively.

\begin{defn}[\bf model]
The model is defined by the following price dynamics of the assets:
\begin{eqnarray}
B_{t} & = & 1\nonumber \\
S_{t}^{1} & = & N_{t}^{1}+\lambda_{1}\left(T-t\right)\label{eq:modeldef}\\
S_{t}^{2} & = & N_{t}^{2}+\lambda_{2}\left(T-t\right)\nonumber 
\end{eqnarray}
where $\lambda_{1}$ and $\lambda_{2}$ are known real constants.
\end{defn}

Essentially, the underlying asset prices are compensated Poisson processes, but the compensators $\lambda_1,\lambda_2$
are not necessarily equal to the intensities $\mu_1,\mu_2$ and therefore the prices are not necessarilty
martingales in the physical measure $\mathbb{P}$. This is similar to the Black-Scholes model where the stock's drift in the physical measure
is not necessarily equal to the risk-free rate.

We are now closely following \citet{harrison1981martingales} in defining the necessary concepts.

\subsection{Risk-neutral pricing of bets}\label{sec:riskneutralpricing}
\begin{defn}[\bf trading strategy]
A \textit{trading strategy} is an $\mathcal{F}_{t}$-predictable
vector process $\phi_{t}=\left(\phi_{t}^{0},\phi_{t}^{1},\phi_{t}^{2}\right)$
that satisfies $\int_{0}^{t}\left|\phi_{s}^{i}\right|ds<\infty$ for
$i\in\left\{ 0,1,2\right\} $. The associated \textit{value process}
is denoted by
\begin{equation}
V_{t}^{\phi}=\phi_{t}^{0}B_{t}+\phi_{t}^{1}S_{t}^{1}+\phi_{t}^{2}S_{t}^{2}.
\end{equation}
 The trading strategy is \textit{self-financing }if 
\begin{equation}
V_{t}^{\phi}=V_{0}^{\phi}+\int_{0}^{t}\phi_{s}^{1}dS_{s}^{1}+\int_{0}^{t}\phi_{s}^{2}dS_{s}^{2}.
\end{equation}
where $\int_{0}^{t}\phi_{s}^{i}dS_{s}^{i}$, $i\in\left\{ 1,2\right\} $
is a Lebesgue Stieltjes integral which is well defined according to
Proposition 2.3.2 on p17 of \citet{bremaud1981point}.
\end{defn}

\begin{defn}[\bf arbitrage-freeness] 
The model is\textit{ arbitrage-free}
if no self-financing trading strategy $\phi_{t}$ exist such that
$\mathbb{P}\left[V_{t}^{\phi}-V_{0}^{\phi}\ge0\right]=1$ and $\mathbb{P}\left[V_{t}^{\phi}-V_{0}^{\phi}>0\right]>0$.
\end{defn}

\begin{defn}[\bf bet] 
A \textit{bet} (also referred to as a \emph{contingent claim} or \emph{derivative}) is an $\mathcal{F}_{T}$-measurable
random variable $X_{T}$.
\end{defn}

In practical terms this means that the value of a bet is revealed
at the end of the game.

\begin{defn}[\bf completeness] 
The model is \textit{complete} if for every
bet $X_{T}$ there exists a self-financing trading strategy $\phi_{t}$
such that $X_{T}=V_{T}^{\phi}$. In this case we say that the bet
$X_{T}$ is \textit{replicated} by the trading strategy $\phi_{t}$.
\end{defn}

\begin{thm}[risk-neutral measure]
\label{prop:equivalentMartingaleMeasure} There
exists a probability measure $\mathbb{Q}$ referred to as the risk-neutral
equivalent martingale measure such that:
\begin{enumerate}
\item[(a)]
The asset processes $B_{t}$, $S_{t}^{1}$, $S_{t}^{2}$ are $\mathbb{Q}$-martingales.
\item[(b)]
The goal processes $N_{t}^{1}$ and $N_{t}^{2}$ in measure $\mathbb{Q}$
are standard Poisson processes with intensities $\lambda_{1}$ and
$\lambda_{2}$ respectively (which are in general different from the
$\mathbb{P}$-intensities of $\mu_{1}$ and $\mu_{2}$).
\item[(c)]
$\mathbb{Q}$ is an equivalent measure to $\mathbb{P}$, that
is the set of events having zero probability is the same for both
measures.
\item[(d)]
$\mathbb{Q}$ is unique. 
\end{enumerate}
\end{thm}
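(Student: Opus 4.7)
The plan is to establish existence by an explicit Esscher-type density, and then use a characterisation of Poisson processes via their compensators to pin down uniqueness.

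For existence I would define a candidate Radon--Nikodym derivative
\begin{equation*}
L_{t}=\exp\bigl((\mu_{1}-\lambda_{1})t+(\mu_{2}-\lambda_{2})t\bigr)\left(\frac{\lambda_{1}}{\mu_{1}}\right)^{N_{t}^{1}}\left(\frac{\lambda_{2}}{\mu_{2}}\right)^{N_{t}^{2}}
\end{equation*}
(assuming $\mu_{1},\mu_{2}>0$, and $\lambda_{1},\lambda_{2}>0$ for equivalence). First I would check that $L_{t}$ is a strictly positive $\mathbb{P}$-martingale with $\mathbb{E}^{\mathbb{P}}[L_{t}]=1$; this is the standard exponential/Esscher martingale for a Poisson process, and the two factors multiply cleanly because $N^{1}$ and $N^{2}$ are $\mathbb{P}$-independent. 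Setting $d\mathbb{Q}/d\mathbb{P}=L_{T}$ then immediately gives point (c). For point (b) I would invoke the Girsanov-type theorem for point processes (e.g.\ Th\'eor\`eme T3 of Chapter VI in \citet{bremaud1981point}), which says that the $\mathbb{Q}$-intensity of $N^{i}$ is the constant $\lambda_{i}$; since $N^{1}$ and $N^{2}$ have no common jumps $\mathbb{P}$-a.s.\ (and hence $\mathbb{Q}$-a.s.\ by equivalence), Watanabe's characterisation of the Poisson process upgrades this to: $N^{1}$, $N^{2}$ are independent standard Poisson processes under $\mathbb{Q}$ with intensities $\lambda_{1}$, $\lambda_{2}$. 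Point (a) is then immediate: $B_{t}\equiv 1$ is trivially a martingale, and
\begin{equation*}
S_{t}^{i}=\bigl(N_{t}^{i}-\lambda_{i}t\bigr)+\lambda_{i}T
\end{equation*}
is a compensated Poisson martingale shifted by a constant.

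For uniqueness (d), let $\widetilde{\mathbb{Q}}$ be any equivalent measure making $S^{1},S^{2}$ martingales. Since $S_{t}^{i}=N_{t}^{i}-\lambda_{i}t+\lambda_{i}T$, the $\widetilde{\mathbb{Q}}$-martingale property of $S^{i}$ forces the predictable compensator of the counting process $N^{i}$ to be $\lambda_{i}t$. The no-common-jumps property survives passage to an equivalent measure, so Watanabe's theorem applied jointly identifies $(N^{1},N^{2})$ as independent Poisson processes with intensities $(\lambda_{1},\lambda_{2})$ under $\widetilde{\mathbb{Q}}$ as well. Since $\mathcal{F}_{T}$ is generated by these two processes and their joint laws under $\widetilde{\mathbb{Q}}$ and $\mathbb{Q}$ coincide, the measures agree on $\mathcal{F}_{T}$.

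The main obstacle I expect is the independence step. Knowing that each $N^{i}$ has the desired deterministic compensator under the new measure does not, on its own, determine the joint law; what rescues the argument is that simultaneous jumps form a $\mathbb{P}$-null set, which is preserved by any equivalent measure. This single observation is what allows Watanabe's characterisation to be applied component-wise and then promoted to joint independence, and it is the only nontrivial probabilistic input beyond the standard Esscher transform and Girsanov-for-point-processes machinery.
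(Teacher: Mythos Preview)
Your proposal is correct and follows essentially the same route as the paper: the same Esscher density $L_{t}$, the same appeal to the Girsanov theorem for point processes in \citet{bremaud1981point}, and the same identification $S_{t}^{i}=(N_{t}^{i}-\lambda_{i}t)+\lambda_{i}T$ for the martingale property. The only difference is cosmetic: for uniqueness you unpack the argument via Watanabe's characterisation together with the no-common-jumps observation, whereas the paper simply cites Br\'emaud's Theorem~8 (p.\,64) that equal intensities force equal measures on the filtration generated by the point processes.
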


\begin{proof}
The proof relies on Girsanov's theorem for point processes (see Theorem
2 on p.165 and Theorem 3 on page 166 in \citet{bremaud1981point})
which states that $N_{t}^{1}$ and $N_{t}^{2}$ are Poisson processes
with intensities $\lambda_{1}$ and $\lambda_{2}$ under the probability
measure $\mathbb{Q}$ which is defined by the Radon-Nikodym-derivative
\begin{equation}
\frac{d\mathbb{Q}}{d\mathbb{P}}=L_{t},
\end{equation}
 where 
\begin{equation}
L_{t}=\prod_{i=1}^{2}\left(\frac{\lambda_{i}}{\mu_{i}}\right)^{N_{t}^{i}}\exp\left[\left(\mu_{i}-\lambda_{i}\right)t\right].
\end{equation}
Then uniqueness follows from Theorem 8 on p.64 in \citet{bremaud1981point}
which states that if two measures have the same set of intensities,
then the two measures must coincide. The Integration Theorem on p.27
of \citet{bremaud1981point} states that $N_{t}^{i}-\lambda_{i}t$
are $\mathbb{Q}$-martingales, therefore the assets $S_{t}^{i}$ are
also $\mathbb{Q}$-martingales for $i\in\left\{ 1,2\right\} $. Proposition
9.5 of \citet{tankov2004financial} claims that $\mathbb{P}$ and
$\mathbb{Q}$ are equivalent probability measures. The process of
the bond asset $B_{t}$ is a trivial martingale in every measure because
it's a deterministic constant which therefore doesn't depend on the
measure.
\end{proof}

\begin{rem}
Changing the measure of a Poisson process changes the intensity and
leaves the drift unchanged. This is in contrast with the case of a
Wiener process where change of measure changes the drift and leaves
the volatility unchanged.
\end{rem}

\begin{thm}
\label{prop:arbFreeComplete}(arbitrage-free) The model is arbitrage-free
and complete.
\end{thm}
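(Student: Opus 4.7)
The plan is to deduce both claims from Theorem~\ref{prop:equivalentMartingaleMeasure} via the Fundamental Theorems of Asset Pricing. For arbitrage-freeness, I would observe that for any self-financing strategy,
\[
V_t^{\phi} \;=\; V_0^{\phi} + \int_0^t \phi_s^1\,dS_s^1 + \int_0^t \phi_s^2\,dS_s^2,
\]
and, since $S_t^1,S_t^2$ are $\mathbb{Q}$-martingales by part~(a) of Theorem~\ref{prop:equivalentMartingaleMeasure}, stochastic integrals of $\mathcal{F}_t$-predictable integrands against them are $\mathbb{Q}$-local martingales. Under the integrability condition built into the definition of a trading strategy (together with the standard Poisson-integration results in \citet{bremaud1981point}) they are true $\mathbb{Q}$-martingales, whence $\mathbb{E}^{\mathbb{Q}}\!\left[V_T^{\phi}-V_0^{\phi}\right]=0$. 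Because $\mathbb{Q}\sim\mathbb{P}$ by part~(c), any $\mathbb{P}$-a.s.\ nonnegative payoff is also $\mathbb{Q}$-a.s.\ nonnegative, and a vanishing expectation then forces it to be identically zero, excluding arbitrage.

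For completeness, I would invoke the martingale representation theorem for the filtration generated by independent Poisson processes (for instance Theorem~T8, Chapter~III of \citet{bremaud1981point}). Given a bet $X_T\in L^1(\mathbb{Q})$, the closed $\mathbb{Q}$-martingale $X_t:=\mathbb{E}^{\mathbb{Q}}[X_T\mid\mathcal{F}_t]$ admits a representation
\[
X_t \;=\; X_0 + \int_0^t \psi_s^1\,dM_s^1 + \int_0^t \psi_s^2\,dM_s^2,
\]
where $M_t^i := N_t^i-\lambda_i t$ is the $\mathbb{Q}$-compensated Poisson process and $\psi^1,\psi^2$ are $\mathcal{F}_t$-predictable. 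From~(\ref{eq:modeldef}) one has $dS_t^i = dN_t^i - \lambda_i\,dt = dM_t^i$, so setting $\phi_t^i := \psi_t^i$ for $i=1,2$ and $\phi_t^0 := X_t - \phi_t^1 S_t^1 - \phi_t^2 S_t^2$ yields a trading strategy whose value process coincides with $X_t$. A direct check using $B_t\equiv 1$ confirms it is self-financing, and evaluating at $t=T$ gives $V_T^{\phi} = X_T$, so $X_T$ is replicated. (Alternatively, one could appeal to the Second FTAP and simply cite the uniqueness in part~(d); the representation argument is merely the underlying mechanism.)

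The main technical hurdle will be the integrability bookkeeping needed to promote the relevant stochastic integrals from local martingales to true martingales, and to apply the representation theorem to a general $\mathcal{F}_T$-measurable bet. Once admissibility is understood to include the standard $L^1$- (or $L^2$-) control of integrands under $\mathbb{Q}$, all required ingredients are already available in \citet{bremaud1981point}, the same monograph on which the proof of Theorem~\ref{prop:equivalentMartingaleMeasure} relies.
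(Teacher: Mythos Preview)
Your proposal is correct and follows essentially the same route as the paper: deduce both claims from Theorem~\ref{prop:equivalentMartingaleMeasure}, with completeness obtained via the Poisson martingale representation theorem in \citet{bremaud1981point} (the paper also cites the uniqueness route through \citet{harrison1981martingales}). The only notable difference is that for arbitrage-freeness you give the direct elementary argument (the gains process is a $\mathbb{Q}$-martingale with zero mean, so a $\mathbb{P}$-a.s.\ nonnegative gain must vanish), whereas the paper simply cites \citet{delbaen1994general}; your version is more self-contained and in fact only uses the easy direction of the First FTAP, which is all that is needed here.
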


\begin{proof}
This follows directly from the first and second fundamental theorems
of finance. To be more specific, arbitrage-freeness follows from theorem
1.1 of \citet{delbaen1994general} which states that the existence
of a risk-neutral measure implies a so-called condition ``no free
lunch with vanishing risk'' which implies arbitrage-freeness. Completeness
follows from theorem 3.36 of \citet{harrison1981martingales} which
states that the model is complete if the risk-neutral measure is unique.
Alternatively it also follows from theorem 3.35 which states that
the model is complete if the martingale representation theorem holds
for all martingales which is the case according to Theorem 17, p.76
of \citet{bremaud1981point}.
\end{proof}

\begin{cor}
\label{prop:value_eq_expectedvalue}The time-$t$ value of a bet is
equal to the risk-neutral expectation of it's value at the end of the
game, formally:
\begin{equation}
X_{t}=\mathbf{E}^{\mathbb{Q}}\left[X_{T}|\mathcal{F}_{t}\right].
\end{equation}
\end{cor}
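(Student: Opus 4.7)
The plan is to combine completeness (Theorem \ref{prop:arbFreeComplete}) with the martingale property of the discounted underlying assets under $\mathbb{Q}$ (Theorem \ref{prop:equivalentMartingaleMeasure}(a)). Since the model is complete, any bet $X_T$ admits a self-financing strategy $\phi_t$ with $V_T^\phi = X_T$. By no-arbitrage, the time-$t$ price $X_t$ of the bet must coincide with the time-$t$ cost of the replicating portfolio, namely $V_t^\phi$: any other price would let one buy the cheaper of the two and sell the more expensive one, producing a risk-free profit and contradicting arbitrage-freeness. So the first step is simply to record $X_t = V_t^\phi$ as a consequence of Theorem \ref{prop:arbFreeComplete}.

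The second step is to argue that $V_t^\phi$ is a $\mathbb{Q}$-martingale. Using the self-financing condition,
\begin{equation}
V_t^\phi = V_0^\phi + \int_0^t \phi_s^1\, dS_s^1 + \int_0^t \phi_s^2\, dS_s^2,
\end{equation}
so $V_t^\phi$ is a stochastic integral of predictable integrands against the $\mathbb{Q}$-martingales $S_t^1, S_t^2$ (the bond contributes nothing since $B_t\equiv 1$). Under suitable integrability of $\phi_t$ against the compensated Poisson processes — which is exactly the setting of the integration theory cited from \citet{bremaud1981point} and used implicitly in defining admissible strategies — the stochastic integral with respect to a compensated point process is itself a martingale. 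Hence $V_t^\phi$ is a $\mathbb{Q}$-martingale.

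The third and final step is to take conditional expectation:
\begin{equation}
V_t^\phi = \mathbf{E}^{\mathbb{Q}}\bigl[V_T^\phi \,\big|\, \mathcal{F}_t\bigr] = \mathbf{E}^{\mathbb{Q}}\bigl[X_T \,\big|\, \mathcal{F}_t\bigr],
\end{equation}
and combining with $X_t = V_t^\phi$ gives the claim.

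The only real technical obstacle is the martingale property of $V^\phi$: one must be careful that the replicating strategies produced by completeness are admissible in a sense strong enough that the stochastic integrals are true martingales (not merely local martingales). This is exactly the content invoked via \citet{bremaud1981point} in the proof of Theorem \ref{prop:arbFreeComplete}, so the corollary follows with essentially no new work beyond citing those integrability results and applying the no-arbitrage pricing argument.
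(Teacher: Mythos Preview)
Your proof is correct and follows essentially the same route as the paper, which simply cites Proposition 3.31 of \citet{harrison1981martingales}; you have unpacked the standard argument behind that citation (replicate, observe the value process is a $\mathbb{Q}$-martingale via stochastic integration against compensated Poisson processes, take conditional expectations). Note that the paper records the identification $X_t = V_t^{\phi}$ separately as Corollary \ref{prop:value_selffinancingstrategy}, so your argument effectively establishes both corollaries at once.
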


\begin{proof}
This follows directly from Proposition 3.31 of \citet{harrison1981martingales}.
\end{proof}

\begin{cor}
\label{prop:value_selffinancingstrategy}The time-$t$ value of a
bet is also equal to the value of the associated self-financing trading
strategy $\phi_{t}$, formally:
\begin{equation}
X_{t}=V_{t}^{\phi}=V_{0}^{\phi}+\int_{0}^{t}\phi_{s}^{1}dS_{s}^{1}+\int_{0}^{t}\phi_{s}^{2}dS_{s}^{2}.\label{eq:betvalue_replication}
\end{equation}
\end{cor}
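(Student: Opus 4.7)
The plan is to combine completeness (Theorem \ref{prop:arbFreeComplete}) with the pricing formula in Corollary \ref{prop:value_eq_expectedvalue}. Since the model is complete, for the $\mathcal{F}_T$-measurable claim $X_T$ there exists a self-financing trading strategy $\phi_t$ with $V_T^\phi=X_T$. The self-financing property then gives the second equality in the statement automatically, so the whole content is showing $X_t=V_t^\phi$ for every $t\in[0,T]$.

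The route I would take is to verify that the value process $V_t^\phi$ is itself a $\mathbb{Q}$-martingale, and then apply Corollary \ref{prop:value_eq_expectedvalue} at time $T$ to identify it with $X_t$. First I would write
\begin{equation*}
V_t^\phi \;=\; V_0^\phi + \int_0^t \phi_s^1\,dS_s^1 + \int_0^t \phi_s^2\,dS_s^2,
\end{equation*}
using the self-financing condition. By Theorem \ref{prop:equivalentMartingaleMeasure}(a), each $S_t^i$ is a $\mathbb{Q}$-martingale with $\mathbb{Q}$-compensated Poisson dynamics $dS_t^i=dN_t^i-\lambda_i\,dt$. Under the admissibility condition $\int_0^t|\phi_s^i|\,ds<\infty$ built into the definition of a trading strategy, the stochastic integrals $\int_0^\cdot \phi_s^i\,dS_s^i$ are well defined as Lebesgue--Stieltjes integrals in the sense of Proposition 2.3.2 of \citet{bremaud1981point} and define $\mathbb{Q}$-local martingales; the standard argument via the point-process martingale calculus of \citet{bremaud1981point} promotes them to true martingales under the implicit integrability of replicating strategies (if necessary, one localises by stopping times and passes to the limit, exactly as in the Harrison--Pliska framework). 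Consequently $V_t^\phi$ is a $\mathbb{Q}$-martingale.

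Applying Corollary \ref{prop:value_eq_expectedvalue} then gives
\begin{equation*}
X_t \;=\; \mathbf{E}^{\mathbb{Q}}\!\left[X_T\mid\mathcal{F}_t\right] \;=\; \mathbf{E}^{\mathbb{Q}}\!\left[V_T^\phi\mid\mathcal{F}_t\right] \;=\; V_t^\phi,
\end{equation*}
where the last equality uses the martingale property of $V^\phi$. Combined with the self-financing decomposition this is exactly equation \eqref{eq:betvalue_replication}.

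The main obstacle is the technical one of ensuring $V_t^\phi$ is a genuine $\mathbb{Q}$-martingale rather than merely a local martingale: the admissibility condition stated in the definition of trading strategy ($\int_0^t|\phi_s^i|\,ds<\infty$) controls the finite-variation ``compensator'' pieces, but the jump integrals $\int_0^t\phi_s^i\,dN_s^i$ need a moment bound to avoid pathological local-martingale behaviour. I would handle this either by invoking the standard localisation argument for point-process stochastic integrals from \citet{bremaud1981point}, or by restricting attention to strategies $\phi$ for which $\mathbf{E}^{\mathbb{Q}}\!\int_0^T|\phi_s^i|\,ds<\infty$, which is the usual class in the Harrison--Pliska formulation being followed here; the remainder of the argument is then a one-line application of the tower property.
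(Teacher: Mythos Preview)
Your argument is correct and is essentially an unpacking of what the paper does in one line: the paper's own proof is simply the citation ``This follows directly from Proposition 3.32 of \citet{harrison1981martingales}.'' Your route---completeness gives a replicating self-financing strategy, the self-financing value process is a $\mathbb{Q}$-martingale because it is a stochastic integral against the $\mathbb{Q}$-martingales $S^i$, and then the tower property identifies $V_t^\phi$ with $X_t=\mathbf{E}^{\mathbb{Q}}[X_T\mid\mathcal{F}_t]$---is exactly the content of that Harrison--Pliska proposition, so the approaches coincide; you have simply written out the proof rather than citing it. Your caveat about the local-versus-true martingale issue is appropriate and is precisely the kind of integrability hypothesis that is implicit in the Harrison--Pliska framework the paper adopts.
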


\begin{proof}
This follows directly from Proposition 3.32 of \citet{harrison1981martingales}.
\end{proof}

\begin{defn}[\bf linear independence]
\label{def-linearindependence} 
The
bets $Z_{T}^{1}$ and $Z_{T}^{2}$ are \textit{linearly independent}
if the self-financing trading strategy $\phi_{t}^{1}=\left(\phi_{t}^{10},\phi_{t}^{11},\phi_{t}^{12}\right)$
that replicates $Z_{T}^{1}$ is $\mathbb{P}$-almost surely linearly
independent from the self-financing trading strategy $\phi_{t}^{2}=\left(\phi_{t}^{20},\phi_{t}^{21},\phi_{t}^{22}\right)$
that replicates $Z_{T}^{2}$. Formally, at any time $t\in\left[0,T\right]$
and for any constants $c_{1},c_{2}\in\mathbb{R}$ 
\begin{equation}
c_{1}\phi_{t}^{1}\ne c_{2}\phi_{t}^{2}\;\;\;\mathbb{P}\,{\it a.s.}
\end{equation}
\end{defn}

\begin{prop}[replication]
\label{prop:replication-from-anything} Any bet $X_{T}$
can be replicated by taking a dynamic position in any two linearly
independent bets $Z_{T}^{1}$ and $Z_{T}^{2}$, formally:
\begin{equation}
X_{t}=X_{0}+\int_{0}^{t}\psi_{s}^{1}dZ_{s}^{1}+\int_{0}^{t}\psi_{s}^{2}dZ_{s}^{2},\label{eq:replication_from_anything}
\end{equation}
where the weights $\psi_{t}^{1},\psi_{t}^{2}$ are equal to the solution
of the following equation:
\begin{equation}
\left(\begin{array}{cc}
\phi_{t}^{11} & \phi_{t}^{12}\\
\phi_{t}^{21} & \phi_{t}^{22}
\end{array}\right)\left(\begin{array}{c}
\psi_{t}^{1}\\
\psi_{t}^{2}
\end{array}\right)=\left(\begin{array}{c}
\phi_{t}^{1}\\
\phi_{t}^{2}
\end{array}\right)\label{eq:replication_equation}
\end{equation}
where $\left(\phi_{t}^{11},\phi_{t}^{12}\right)$, $\left(\phi_{t}^{21},\phi_{t}^{22}\right)$
and $\left(\phi_{t}^{1},\phi_{t}^{2}\right)$ are the components of
the trading strategy that replicates $Z_{T}^{1}$, $Z_{T}^{2}$ and
$X_{T}$, respectively. The integral $\int_{0}^{t}\psi_{s}^{1}dZ_{s}^{1}$
is to be interpreted in the following sense:
\begin{equation}
\int_{0}^{t}\psi_{s}^{1}dZ_{s}^{1}=\int_{0}^{t}\psi_{s}^{1}\phi_{s}^{11}dS_{s}^{1}+\int_{0}^{t}\psi_{s}^{1}\phi_{s}^{12}dS_{s}^{2}
\end{equation}
and similarly for $\int_{0}^{t}\psi_{s}^{2}dZ_{s}^{2}$.
\end{prop}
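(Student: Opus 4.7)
The plan is to start from Corollary \ref{prop:value_selffinancingstrategy}, which supplies three self-financing replicating representations: $dX = \phi^1\,dS^1 + \phi^2\,dS^2$ for the target bet and $dZ^i = \phi^{i1}\,dS^1 + \phi^{i2}\,dS^2$ for $i=1,2$ for the two candidate hedging instruments. Proving (\ref{eq:replication_from_anything}) then amounts to expressing the $\phi^j\,dS^j$ in terms of the $\phi^{ij}\,dS^j$, that is, finding predictable $\psi^1,\psi^2$ such that
\[
\psi^1\,dZ^1 + \psi^2\,dZ^2 \;=\; \phi^1\,dS^1 + \phi^2\,dS^2.
\]

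Substituting the representations of $dZ^1$ and $dZ^2$ into the left hand side, collecting coefficients of $dS^1$ and $dS^2$, and matching them with $\phi^1\,dS^1 + \phi^2\,dS^2$ yields precisely the $2\times 2$ linear system (\ref{eq:replication_equation}), whose rows encode the $(S^1,S^2)$-components of the strategies replicating $Z^1_T$ and $Z^2_T$. I would therefore \emph{define} $(\psi^1_t,\psi^2_t)$ pointwise in $(\omega,t)$ as the unique solution of this system whenever its coefficient matrix is invertible. Integrating from $0$ to $t$ and combining with $X_t = X_0 + \int_0^t \phi^1_s\,dS^1_s + \int_0^t \phi^2_s\,dS^2_s$ from Corollary \ref{prop:value_selffinancingstrategy} then gives (\ref{eq:replication_from_anything}) at once.

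The main obstacle is justifying that the coefficient matrix is $\mathbb{P}$-a.s.\ invertible on $[0,T]$. This is precisely the role of linear independence of $Z^1_T$ and $Z^2_T$ (Definition \ref{def-linearindependence}): if its determinant vanished on a non-negligible set, one could extract constants $c_1,c_2$, not both zero, such that $c_1\phi^1_t = c_2\phi^2_t$ on that set with positive probability, contradicting the hypothesis. Once invertibility is established, predictability of $(\psi^1,\psi^2)$ follows because they are rational functions of predictable processes (with a.s.\ non-vanishing denominator), and the integrability condition demanded of a trading strategy reduces on the bounded interval $[0,T]$ to the integrability already supplied for the $\phi$'s. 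The expansion of $\int_0^t \psi^i\,dZ^i$ as $\int_0^t \psi^i\phi^{i1}\,dS^1 + \int_0^t \psi^i\phi^{i2}\,dS^2$ asserted in the statement is then just bilinearity of the Lebesgue--Stieltjes integral, which closes the argument.
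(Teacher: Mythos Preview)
Your proposal is correct and follows essentially the same approach as the paper: substitute the replicating representations $dZ^i=\phi^{i1}dS^1+\phi^{i2}dS^2$ into the right-hand side of (\ref{eq:replication_from_anything}) and match against $dX=\phi^1 dS^1+\phi^2 dS^2$ from Corollary~\ref{prop:value_selffinancingstrategy}. You in fact supply more detail than the paper's one-line verification, making explicit the role of linear independence in guaranteeing invertibility of the coefficient matrix and the predictability of the resulting $\psi^i$.
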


\begin{proof}
Substituting $dZ_{t}^{1}=\phi_{t}^{11}dS_{t}^{1}+\phi_{t}^{21}dS_{t}^{2}$,
$dZ_{t}^{2}=\phi_{t}^{12}dS_{t}^{1}+\phi_{t}^{22}dS_{t}^{2}$ and
Equation \ref{eq:betvalue_replication} into Equation \ref{eq:replication_from_anything}
verifies the proposition.
\end{proof}

\subsection{European bets}
\begin{defn}[\bf European bet]
\label{def:european}A \textit{European bet}
is a bet with a value depending only on the final number of goals
$N_{T}^{1}$, $N_{T}^{2}$, that is one of the form
\begin{equation}
X_{T}=\Pi\left(N_{T}^{1},N_{T}^{2}\right)
\end{equation}
where $\Pi$ is a known scalar function $\mathbb{N}\times\mathbb{N}\rightarrow\mathbb{R}$ which is referred to as the \textit{payoff function}.
\end{defn}

\begin{example}
A typical example is a bet that pays out $1$ if the home team scores
more goals than the away team (home wins) and pays nothing otherwise,
that is $\Pi\left(N_{T}^{1},N_{T}^{2}\right)=\mathbf{1}\left(N_{T}^{1}>N_{T}^{2}\right)$
where the function $\mathbf{1}\left(A\right)$ takes the value of
1 if $A$ is true and zero otherwise. Another example is a bet that
pays out $1$ if the total number of goals is strictly higher than
2 and pays nothing otherwise, that is $\Pi\left(N_{T}^{1},N_{T}^{2}\right)=\mathbf{1}\left(N_{T}^{1}+N_{T}^{2}>2\right)$.
\end{example}

\begin{prop}[pricing formula]
\label{prop:european_closedform} 
The time-$t$ value of a European bet with payoff function $\Pi$ is given by the explicit formula
\begin{equation}
X_{t}=\sum_{n_{1}=N_{1}^{t}}^{\infty}\sum_{n_{2}=N_{2}^{t}}^{\infty}\Pi\left(n_{1},n_{2}\right)P\left(n_{1}-N_{t}^{1},\lambda_{1}\left(T-t\right)\right)P\left(n_{2}-N_{t}^{2},\lambda_{2}\left(T-t\right)\right),\label{eq:europeanformula-1}
\end{equation}
where $P\left(N,\Lambda\right)$ is the Poisson probability,
that is $P\left(N,\Lambda\right)=\frac{e^{-\Lambda}}{N!}\Lambda^{N}$
if $N\ge0$ and $P\left(N,\Lambda\right)=0$ otherwise.
\end{prop}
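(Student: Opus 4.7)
The plan is to combine Corollary \ref{prop:value_eq_expectedvalue} with the $\mathbb{Q}$-law of the Poisson increments established in Theorem \ref{prop:equivalentMartingaleMeasure}. First I would write
\begin{equation}
X_t=\mathbf{E}^{\mathbb{Q}}\left[\Pi\left(N_T^1,N_T^2\right)\mid\mathcal{F}_t\right],
\end{equation}
using Corollary \ref{prop:value_eq_expectedvalue} applied to the $\mathcal{F}_T$-measurable random variable $X_T=\Pi(N_T^1,N_T^2)$. The remainder of the argument is a direct evaluation of this conditional expectation.

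Next I would decompose $N_T^i=N_t^i+(N_T^i-N_t^i)$ for $i\in\{1,2\}$. By Theorem \ref{prop:equivalentMartingaleMeasure}(b), under $\mathbb{Q}$ the processes $N^1$ and $N^2$ are independent standard Poisson processes with intensities $\lambda_1,\lambda_2$, so the increments $N_T^i-N_t^i$ are independent of $\mathcal{F}_t$, independent of each other, and distributed as Poisson random variables with means $\lambda_i(T-t)$. Conditional on $\mathcal{F}_t$, the pair $(N_t^1,N_t^2)$ is a known constant, so the conditional law of $(N_T^1,N_T^2)$ is a product of two shifted Poisson distributions with supports $\{N_t^i,N_t^i+1,\dots\}$ respectively.

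Substituting this joint conditional law into the expectation and introducing summation variables $n_i=N_t^i+k_i$ with $k_i\in\mathbb{N}$ yields
\begin{equation}
X_t=\sum_{n_1=N_t^1}^{\infty}\sum_{n_2=N_t^2}^{\infty}\Pi(n_1,n_2)\,P\!\left(n_1-N_t^1,\lambda_1(T-t)\right)P\!\left(n_2-N_t^2,\lambda_2(T-t)\right),
\end{equation}
which is the claimed identity. The use of Fubini/Tonelli to split the double sum into the product of two Poisson weights is legitimate because the summands factor over the two independent coordinates.

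I do not expect any deep obstacle here: the result is essentially a bookkeeping computation once the $\mathbb{Q}$-dynamics of the goal processes have been identified. The only mildly delicate point is ensuring the double series is well defined, which implicitly requires $\Pi$ to be $\mathbb{Q}$-integrable with respect to the product Poisson law — a hypothesis that should be understood as part of the definition of a \emph{bet} via $\mathcal{F}_T$-measurability together with integrability, and which is automatic for the bounded payoffs $\Pi\in\{0,1\}$ that appear in practice.
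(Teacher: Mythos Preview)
Your proof is correct and follows exactly the same route as the paper, which simply states that the formula follows directly from Corollary \ref{prop:value_eq_expectedvalue} and Definition \ref{def:european}. You have merely spelled out the standard evaluation of the conditional expectation using the $\mathbb{Q}$-Poisson increments from Theorem \ref{prop:equivalentMartingaleMeasure}(b), which is implicit in the paper's one-line argument.
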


\begin{proof}
This follows directly form Proposition \ref{prop:value_eq_expectedvalue}
and Definition \ref{def:european}.
\end{proof}

As we have seen, the price of a European bet is a function of the
time $t$ and the number of goals $\left(N_{t}^{1},N_{t}^{2}\right)$
and intensities $\left(\lambda_{1},\lambda_{2}\right)$. Therefore,
from now on we will denote this function by $X_{t}=X_{t}\left(N_{t}^{1},N_{t}^{2}\right)$
or $X_{t}=X_{t}\left(t,N_{t}^{1},N_{t}^{2},\lambda_{1},\lambda_{2}\right)$,
depending on whether the context requires the explicit dependence
on intensities or not.

It is important to note that Arrow-Debreu bets do exist in in-play football betting and are referred to as Correct Score bets.

\begin{defn}[\bf Arrow-Debreu bets]\label{defn:ad} \textit{Arrow-Debreu bets}, also known as Correct Score bets are European bets
with a payoff function $\Pi_{AD \left(K_1, K_2\right)}$ equal to $1$ if the final
score $\left(N_{T}^{1},N_{T}^{2}\right)$ is equal to a specified result $\left(K_1,K_2\right)$
and $0$ otherwise:
\begin{equation}
\Pi_{AD \left(K_1, K_2\right)} = \mathbf{1} \left( N_T^1=K_1, N_T^2=K_2\right)
\end{equation}
\end{defn}

According to the following proposition, Arrow-Debreu bets can be used to statically replicate any European bet:

\begin{prop}[static replication]
The time-$t$ value of a European bet with payoff function $\Pi$ in terms of time-$t$ values of Arrow-Debreu bets is given by:
\begin{equation}
X_{t}=\sum_{K_1=N_{1}^{t}}^{\infty}\sum_{K_2=N_{2}^{t}}^{\infty}\Pi\left(K_1,K_2\right)X_{t,AD\left(K_1,K_2\right)},
\end{equation}
where $X_{t,AD\left(K_1,K_2\right)}$ denotes the time-$t$ value of an Arrow-Debreu bet that pays out if the final scores are equal
to $\left(K_1,K_2\right)$.
\end{prop}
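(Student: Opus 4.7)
The plan is to give an essentially one-line proof by comparing the explicit pricing formula from Proposition \ref{prop:european_closedform} with the corresponding formula for Arrow--Debreu bets, and then to give an alternative probabilistic derivation via Corollary \ref{prop:value_eq_expectedvalue} that makes the ``static replication'' interpretation transparent.

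First I would specialise the pricing formula of Proposition \ref{prop:european_closedform} to the Arrow--Debreu payoff $\Pi_{AD(K_1,K_2)}=\mathbf{1}(N_T^1=K_1,N_T^2=K_2)$ from Definition \ref{defn:ad}. The double sum collapses to a single term because the indicator is nonzero only at $(n_1,n_2)=(K_1,K_2)$, yielding
\begin{equation}
X_{t,AD(K_1,K_2)} = P\bigl(K_1-N_t^1,\lambda_1(T-t)\bigr)\,P\bigl(K_2-N_t^2,\lambda_2(T-t)\bigr),
\end{equation}
which is understood to vanish when $K_1<N_t^1$ or $K_2<N_t^2$. Substituting this expression into the right-hand side of the claimed identity and relabelling the summation indices $K_i\leftrightarrow n_i$ immediately reproduces the formula of Proposition \ref{prop:european_closedform}, proving the statement.

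Alternatively, and perhaps more illuminatingly, I would decompose the payoff pointwise as
\begin{equation}
\Pi(N_T^1,N_T^2) \;=\; \sum_{K_1=0}^{\infty}\sum_{K_2=0}^{\infty} \Pi(K_1,K_2)\,\mathbf{1}(N_T^1=K_1,N_T^2=K_2),
\end{equation}
which holds $\mathbb{Q}$-almost surely since exactly one indicator on the right is nonzero for any realisation. Applying Corollary \ref{prop:value_eq_expectedvalue} and interchanging the conditional expectation with the (countable) sum gives $X_t=\sum_{K_1,K_2}\Pi(K_1,K_2)\,X_{t,AD(K_1,K_2)}$, and the sum effectively starts at $K_i=N_t^i$ because the Poisson paths are non-decreasing so $X_{t,AD(K_1,K_2)}=0$ whenever $K_i<N_t^i$.

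The only genuinely delicate point, and the one I would flag as the main technical obstacle, is justifying the interchange of conditional expectation and the infinite double sum in the probabilistic argument. This requires some integrability hypothesis on $\Pi$ (for instance, that $X_t$ is finite, which is implicit in having already written down the formula of Proposition \ref{prop:european_closedform}); once this is assumed, dominated or monotone convergence (applied separately to the positive and negative parts of $\Pi$) closes the argument. The purely algebraic substitution route avoids this issue altogether and is what I would present.
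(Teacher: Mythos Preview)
Your proposal is correct and your primary argument is essentially the paper's own proof, which simply states that the result follows directly from Proposition \ref{prop:european_closedform} and Definition \ref{defn:ad}; you have merely spelled out the substitution explicitly. Your alternative probabilistic derivation via Corollary \ref{prop:value_eq_expectedvalue} is a nice addition that the paper does not include, and your remark on the integrability needed to interchange sum and expectation is a valid technical caveat, though the algebraic route you ultimately present sidesteps it just as the paper does.
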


\begin{proof}
This follows directly form Proposition \ref{prop:european_closedform} and Definition \ref{defn:ad}.
\end{proof}

Let us now define the partial derivatives of the bet values with respect to change in time and the number goals scored. These are required for hedging and serve
the same purpose as the \textit{greeks} in the Black-Scholes framework.

\begin{defn}[Greeks]
\label{def:deltas_def}
The \textit{greeks} are the
values of the following forward difference operators ($\delta_{1}$,
$\delta_{2}$) and partial derivative operator applied to the bet
value:
\begin{eqnarray}
\delta_{1}X_{t}\left(N_{t}^{1},N_{t}^{2}\right) & = & X_{t}\left(N_{t}^{1}+1,N_{t}^{2}\right)-X_{t}\left(N_{t}^{1},N_{t}^{2}\right)\label{eq:delta1}\\
\delta_{2}X_{t}\left(N_{t}^{1},N_{t}^{2}\right) & = & X_{t}\left(N_{t}^{1},N_{t}^{2}+1\right)-X_{t}\left(N_{t}^{1},N_{t}^{2}\right)\label{eq:delta2}\\
\partial_{t}X_{t}\left(N_{t}^{1},N_{t}^{2}\right) & = & \lim_{dt\rightarrow0}\frac{1}{dt}\left[X_{t+dt}\left(N_{t}^{1},N_{t}^{2}\right)-X_{t}\left(N_{t}^{1},N_{t}^{2}\right)\right]
\end{eqnarray}
\end{defn}

\begin{rem*}
The forward difference operators $\delta_{1}$, $\delta_{2}$ play
the role of Delta and the partial derivative operator $\partial_{t}$
plays the role of Theta in the Black-Scholes framework.
\end{rem*}

\begin{thm}[Kolmogorov forward equation]
\label{prop:PIDE} The value of a European
bet $X\left(t,N_{t}^{1},N_{t}^{2}\right)$ with a payoff function
$\Pi(N_{T}^{1},N_{T}^{2})$ satisfies the following Feynman-Kac representation
on the time interval $t\in\left[0,T\right]$ which is also known as
the Kolmogorov forward equation: 
\begin{eqnarray}
\partial_{t}X\left(t,N_{t}^{1},N_{t}^{2}\right) & = & -\lambda_{1}\delta_{1}X\left(t,N_{t}^{1},N_{t}^{2}\right)-\lambda_{2}\delta_{2}X\left(t,N_{t}^{1},N_{t}^{2}\right)\label{eq:PIDE}
\end{eqnarray}
with boundary condition:
\[
X_{T}\left(T,N_{T}^{1},N_{T}^{2}\right)=\Pi\left(N_{T}^{1},N_{T}^{2}\right).
\]
\end{thm}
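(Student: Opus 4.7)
The plan is to exploit that $X_t=\mathbf{E}^{\mathbb{Q}}[X_T\mid\mathcal{F}_t]$ is a $\mathbb{Q}$-martingale by Corollary~\ref{prop:value_eq_expectedvalue}, to apply It\^o's formula for pure-jump processes to $X(t,N_t^1,N_t^2)$, and then to read off the Kolmogorov equation by matching the predictable drift against zero.

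First I would record the jump-rule version of It\^o. Since $N^1$ and $N^2$ are independent Poisson processes they $\mathbb{P}$-a.s.\ (hence $\mathbb{Q}$-a.s., by Theorem~\ref{prop:equivalentMartingaleMeasure}(c)) have no common jump times, so on each inter-jump interval $X$ is smooth in $t$ and constant in $(N^1,N^2)$, while at a jump only one counter increments by one. Consequently
\[
dX(t,N_t^1,N_t^2)=\partial_t X\,dt+\delta_1 X\,dN_t^1+\delta_2 X\,dN_t^2,
\]
with the operators from Definition~\ref{def:deltas_def} evaluated at $(t,N_{t-}^1,N_{t-}^2)$, so all integrands are predictable.

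Next I would compensate under $\mathbb{Q}$. By Theorem~\ref{prop:equivalentMartingaleMeasure}(b), the processes $M_t^i:=N_t^i-\lambda_i t$ are $\mathbb{Q}$-martingales, and substituting $dN_t^i=dM_t^i+\lambda_i\,dt$ yields
\[
dX_t=\bigl(\partial_t X+\lambda_1\delta_1 X+\lambda_2\delta_2 X\bigr)\,dt+\delta_1 X\,dM_t^1+\delta_2 X\,dM_t^2.
\]
The two stochastic integrals are local $\mathbb{Q}$-martingales and the left-hand side is a true $\mathbb{Q}$-martingale, so uniqueness of the Doob--Meyer decomposition of the semimartingale $X_t$ forces the bracketed predictable finite-variation term to vanish identically. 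This is precisely equation~(\ref{eq:PIDE}), and the terminal condition $X(T,\cdot,\cdot)=\Pi$ is just Definition~\ref{def:european}.

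The main obstacle is the technical one of justifying that $\partial_t X$ is well-defined so that the above pathwise dynamics make sense, and that the compensated integrals really are martingales rather than only local ones. Both are handled through the explicit series in Proposition~\ref{prop:european_closedform}: the Poisson weights $P\bigl(n-N_t,\lambda(T-t)\bigr)$ are $C^\infty$ in $t$ with termwise-summable derivatives, so the interchange of summation and $\partial_t$ is legal by dominated convergence; and for bounded $\Pi$ one has $|\delta_i X|\le 2\sup|\Pi|$, while in general a routine localisation argument suffices. As a purely computational cross-check one could instead differentiate the series of Proposition~\ref{prop:european_closedform} termwise via the identity $\partial_\Lambda P(N,\Lambda)=P(N-1,\Lambda)-P(N,\Lambda)$, and verify by re-indexing that the resulting $t$-derivative equals $-\lambda_1\delta_1 X-\lambda_2\delta_2 X$.
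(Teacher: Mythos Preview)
Your argument is correct. The main line you take---apply It\^o's formula for jump processes, compensate, and then kill the predictable finite-variation part via the martingale property / Doob--Meyer uniqueness---is a genuinely different route from the paper's own proof. The paper simply remarks that the identity can be verified by direct differentiation of the explicit Poisson series in Proposition~\ref{prop:european_closedform} (and cites standard references); in other words, the paper's entire proof is the computation you relegate to a ``purely computational cross-check'' in your last paragraph. Your martingale argument is more conceptual and makes the link to risk-neutral pricing transparent, and it is in fact the same It\^o decomposition the paper later uses in proving Proposition~\ref{prop:tradingstrategy_eq_deltas}; the paper's direct verification, on the other hand, is shorter and avoids the Doob--Meyer / local-versus-true-martingale housekeeping altogether, since one just differentiates termwise and re-indexes.
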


\begin{proof}
The proposition can be easily verified using the closed form formula
from Proposition \ref{prop:european_closedform}. Furthermore, several
proofs are available in the literature, see for example Proposition
12.6 in \citet{tankov2004financial}, Theorem 6.2 in \citet{ross2006introduction}
or Equation 13 in \citet{feller1940integro}.
\end{proof}

\begin{rem}
Equation \ref{eq:PIDE} also has the consequence that any portfolio
of European bets that changes no value if either team scores a goal
(Delta-neutral) does not change value between goals either (Theta-neutral).
We note without a proof, that this holds for all bets in general.
\end{rem}

\begin{cor}\label{prop:dX_dLambda}
The value of a European bet $X\left(t,N_{t}^{1},N_{t}^{2},\lambda_{1},\lambda_{2}\right)$
satisfies the following: 
\begin{eqnarray}
\frac{\partial}{\partial\lambda_{i}}X_{t} & = & \left(T-t\right)\delta_{i}X_{t}\label{eq:dX_dLambda}
\end{eqnarray}
where $i\in\left\{ 1,2\right\} $.
\end{cor}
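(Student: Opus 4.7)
My plan is to prove the identity directly from the closed-form formula in Proposition \ref{prop:european_closedform}. Write $\Lambda_i = \lambda_i(T-t)$ and differentiate the product of Poisson weights appearing in the series representation of $X_t$. By the chain rule, $\partial_{\lambda_i} = (T-t)\,\partial_{\Lambda_i}$, so everything reduces to computing the derivative of a single Poisson weight with respect to its parameter, then summing back up. Before exchanging $\partial_{\lambda_i}$ with the double sum, I would note that for any European payoff $\Pi$ making the series converge absolutely, the termwise-differentiated series is dominated uniformly in a neighbourhood of $\lambda_i$ (the Poisson tails decay faster than any polynomial), so Weierstrass/dominated convergence justifies the interchange.

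The key algebraic identity I would use is the Poisson shift
\begin{equation}
\frac{\partial}{\partial \Lambda} P(m,\Lambda) \;=\; P(m-1,\Lambda) - P(m,\Lambda),
\end{equation}
valid for all $m\ge 0$ with the convention $P(-1,\Lambda)=0$ built into Proposition \ref{prop:european_closedform}. This is immediate from $P(m,\Lambda)=e^{-\Lambda}\Lambda^m/m!$. Applying it (say for $i=1$) inside the double sum gives
\begin{equation}
\frac{\partial X_t}{\partial \lambda_1} \;=\; (T-t)\sum_{n_1=N_t^1}^\infty\sum_{n_2=N_t^2}^\infty \Pi(n_1,n_2)\,\bigl[P(n_1-N_t^1-1,\Lambda_1)-P(n_1-N_t^1,\Lambda_1)\bigr]\,P(n_2-N_t^2,\Lambda_2).
\end{equation}

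The final step is to recognise each of the two resulting sums as a value of $X_t$ with a shifted argument. Because $P(-1,\Lambda_1)=0$, the $n_1=N_t^1$ term in the first sum vanishes, so reindexing it exactly reproduces the defining series for $X_t(t,N_t^1+1,N_t^2,\lambda_1,\lambda_2)$; the second sum is $X_t(t,N_t^1,N_t^2,\lambda_1,\lambda_2)$ by definition. Subtracting gives the forward difference $\delta_1 X_t$, proving the identity for $i=1$; the case $i=2$ is symmetric. I do not anticipate any real obstacle beyond verifying the interchange of limit and sum, which is standard given the rapid decay of Poisson weights.
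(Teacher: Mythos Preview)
Your proposal is correct and is precisely the direct computation from Proposition~\ref{prop:european_closedform} that the paper invokes; the paper simply states that the corollary follows from that closed-form formula without spelling out the Poisson shift identity or the reindexing, which you have made explicit.
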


\begin{proof}
This follows directly from Proposition \ref{prop:european_closedform}.
\end{proof}

\begin{prop}[portfolio weights]\label{prop:tradingstrategy_eq_deltas}
The components
$\left(\phi_{t}^{1},\phi_{t}^{2}\right)$ of the trading strategy
that replicates a European bet $X_{T}$ are equal to the forward difference
operators $\left(\delta_{1},\delta_{2}\right)$ of the bet, formally:
\begin{eqnarray}
\phi_{t}^{1} & = & \delta_{1}X\left(t,N_{t}^{1},N_{t}^{2}\right)\\
\phi_{t}^{2} & = & \delta_{2}X\left(t,N_{t}^{1},N_{t}^{2}\right).
\end{eqnarray}
\end{prop}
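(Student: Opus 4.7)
The plan is to differentiate the explicit functional representation $X_t = X(t,N_t^1,N_t^2)$ furnished by the European structure, use the Kolmogorov forward equation of Theorem \ref{prop:PIDE} to absorb the drift into the underlyings, and then read off $(\phi^1,\phi^2)$ by a uniqueness argument on the self-financing decomposition of Corollary \ref{prop:value_selffinancingstrategy}.

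First I would apply the change-of-variables formula for pure-jump processes to $X(t,N_t^1,N_t^2)$. Since $N^1$ and $N^2$ are independent Poisson processes they almost surely never jump simultaneously, and $X$ evolves deterministically between jumps, so the formula splits cleanly into
\begin{equation*}
dX_t = \partial_t X_{t-}\, dt + \delta_1 X_{t-}\, dN_t^1 + \delta_2 X_{t-}\, dN_t^2,
\end{equation*}
where I abbreviate $X_{t-} := X(t, N_{t-}^1, N_{t-}^2)$. Substituting $dN_t^i = dS_t^i + \lambda_i\, dt$ from the model definition and then invoking Theorem \ref{prop:PIDE} to cancel the resulting drift bracket produces
\begin{equation*}
dX_t = \delta_1 X_{t-}\, dS_t^1 + \delta_2 X_{t-}\, dS_t^2.
\end{equation*}
This is exactly the form of the self-financing representation from Corollary \ref{prop:value_selffinancingstrategy}, and the candidate integrands $\phi_t^i := \delta_i X_{t-}$ are left-continuous, $\mathcal{F}_{t-}$-measurable, and trivially integrable on $[0,T]$ since only finitely many jumps occur almost surely.

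The main obstacle is uniqueness: a priori a different predictable pair $(\tilde\phi^1, \tilde\phi^2)$ could yield the same value process. I would close this by exploiting the jump structure directly. At a jump time $\tau$ of $N^1$ the process $S^2$ does not jump (by independence of the two Poissons under $\mathbb{P}$, and hence under the equivalent $\mathbb{Q}$), so evaluating the self-financing identity at $\tau$ gives $X_\tau - X_{\tau-} = \phi_\tau^1$, which pins $\phi_\tau^1 = \delta_1 X_{\tau-}$; the symmetric argument handles $\phi^2$ at jumps of $N^2$. The identification between the predictable version $\delta_i X_{t-}$ and the stated $\delta_i X(t,N_t^1,N_t^2)$ is harmless, since the two differ only at the countably many jump times and therefore coincide Lebesgue--Stieltjes-almost everywhere as integrands against $dS^1$ and $dS^2$.
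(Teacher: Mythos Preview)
Your proof is correct and follows essentially the same route as the paper's: both apply It\^o's formula for jump processes to $X(t,N_t^1,N_t^2)$ and compare the result with the self-financing representation of Corollary~\ref{prop:value_selffinancingstrategy}. The one organizational difference is that you invoke the Kolmogorov forward equation (Theorem~\ref{prop:PIDE}) explicitly to kill the $dt$-term and land directly on the self-financing form, whereas the paper writes out both representations side by side and matches the jump sums by varying the jump times; your use of the PIDE makes the drift bookkeeping slightly cleaner, and your explicit treatment of uniqueness (matching increments at the jump times of each $N^i$) fills in a step the paper leaves implicit.
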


\begin{proof}
Recall that according to Proposition \ref{prop:value_selffinancingstrategy},
the time-$t$ value of a bet is equal to $X_{t}=X_{0}+\sum_{i=1}^{2}\int_{0}^{t}\phi_{s}^{i}dS_{s}^{i}$,
which after substituting $dS_{t}^{i}=dN_{t}^{i}-\lambda_{i}dt$ becomes
\begin{eqnarray}
X_{t} & = & X_{0}+\int_{0}^{t}\left(\phi_{s}^{1}\lambda_{1}+\phi_{s}^{2}\lambda_{2}\right)ds\nonumber \\
 &  & +\sum_{k=0}^{N_{t}^{1}}\phi_{t_{k}^{1}}^{1}+\sum_{k=0}^{N_{t}^{2}}\phi_{t_{k}^{2}}^{2},\label{eq:proof_replication_def}
\end{eqnarray}
 where we used $\int_{0}^{t}\phi_{s}^{i}dN_{s}^{i}=\sum_{k=0}^{N_{t}^{i}}\phi_{t_{k}^{1}}^{i}$
where $0\le t_{k}^{i}\le t$ is the time of the $k.$th jump (goal)
of the process $N_{t}^{i}$ for $i\in\left\{ 1,2\right\} $.

On the other hand, using Ito's formula for jump processes (Proposition
8.15, \citet{tankov2004financial}), which applies because the closed
form formula in Proposition \ref{prop:european_closedform} is infinitely
differentiable, the value of a European bet is equal to 
\begin{eqnarray}
X_{t} & = & X_{0}+\int_{0}^{t}\partial_{s}X\left(s,N_{s}^{1},N_{s}^{2}\right)ds\nonumber \\
 &  & +\sum_{k=0}^{N_{t}^{1}}\delta_{1}X\left(t_{k}^{1},N_{t_{k}^{1}-}^{1},N_{t_{k}^{1}-}^{2}\right)+\sum_{k=0}^{N_{t}^{2}}\delta_{2}X\left(t_{k}^{2},N_{t_{k}^{2}-}^{1},N_{t_{k}^{2}-}^{2}\right),\label{eq:proof_replication_ito}
\end{eqnarray}
where $t_{k}^{i}-$ refers to the fact that the value of the processes
is to be taken before the jump.

Because the equality between Equations \ref{eq:proof_replication_def}
and \ref{eq:proof_replication_ito} hold for every possible jump times,
the terms behind the sums are equal which proves the proposition.
\end{proof}

\section{Model Calibration}\label{sub:Calibration}
In this section we discuss how to calibrate the model parameters to historical
market prices. We demonstrate that a unique equivalent martingale measure
$\mathbb{Q}$ exists, that is, a set of intensities $\lambda_{1},\lambda_{2}$
exist that are consistent with the prices of all bets observed on the market (see
Propositions \ref{prop:equivalentMartingaleMeasure} and \ref{prop:arbFreeComplete}).

We apply a least squares approach in which we consider market prices
of a set of bets and find model intensities that deliver model prices
for these bets that are as close as possible to the market prices.
Specifically, we minimize the sum of the square of the weighted differences
between the model and market mid prices as a function of model intensities,
using market bid-ask spreads as weights. The reason for choosing a
bid-ask spread weighting is that we would like to take into account
bets with a lower bid-ask spread with a higher weight because the
price of these bets is assumed to be more certain. Formally, we minimize the following
expression:

\begin{equation}
R\left(\lambda_{t}^{1},\lambda_{t}^{2}\right)=\sqrt{\frac{1}{n}\sum_{i=1}^{n}\left[\frac{X_{t}^{i,{\it MID}}-X_{t}^{i}\left(\lambda_{t}^{1},\lambda_{t}^{2},N_{t}^{1},N_{t}^{2}\right)}{\frac{1}{2}\left(X_{t}^{i,{\it SELL}}-X_{t}^{i,{\it BUY}}\right)}\right]^{2}},\label{eq:calibration}
\end{equation}
where $n$ is the total number of bets used, $X_{t}^{i,{\it BUY}}$
and $X_{t}^{i,{\it SELL}}$ are the best market buy and sell quotes
of the $i.$th type of bet at time $t$, $X_{t}^{i,{\it MID}}$ is
the market mid price which is the average of the best buy and sell
quotes, $X_{t}^{i}\left(N_{t}^{1},N_{t}^{2},\lambda_{t}^{1},\lambda_{t}^{2}\right)$
is the model price of the $i$.th bet at time $t$, given the current
number of goals $N_{t}^{1},N_{t}^{2}$ and model intensity parameters
$\lambda_{t}^{1},\lambda_{t}^{2}$, see Proposition \ref{prop:european_closedform}.
This minimization procedure is referred to as model calibration.

Calibration has been performed using a time step of 1 minute during the
game, independently at each time step. We used the three
most liquid groups of bets which in our case were Match Odds, Over
/ Under and Correct Score with a total of 31 bet types in these three
categories. Appendix \ref{sec:Valuation-of-Bets} describes these
bet types in detail.

The continuous lines in Figures \ref{fig:Match-Odds-values.} and
\ref{fig:Over-Under-values.} show the calibrated model prices while
the dashed lines are the market buy and sell offers. It can be seen
that the calibrated values are close to the market quotes, although
they are not always within the bid-ask spread. As the measures of
the goodness of the fit we use the optimal value of the cost function
of Equation \ref{eq:calibration}, which is the average distance of
the calibrated values from the market mid prices in units of bid-ask
spread, the calibration error is shown in Figure \ref{fig:residual}. We performed
calibration for multiple games of the Euro 2012 Championship, the
time average of the calibration errors for each game is shown in Table
\ref{tab:Calibration-errors}. The mean and standard deviation of
the calibration errors across games is $1.57\pm0.27$ which is to
be interpreted in units of bid-ask spread because of the weighting
of the error function in Equation \ref{eq:calibration}. This means,
that on average, the calibrated values are outside of the bid-ask
spread, but not significantly. Given that a model of only 2 parameters
has been calibrated to a total of 31 independent market quotes, this
is a reasonably good result.

Finally, the implied intensities, along with the estimated uncertainties
of the calibration using the bid-ask spreads are shown in Figure \ref{fig:Calibrated-intensities}.
Contrary to our initial assumption of constant intensities, the actual intensities
fluctuate over time and there also seems to be an increasing trend in the implied goal intensities of both teams.

In order to better understand the nature of the implied intensity
process, we estimated the drift and volatility of the log total intensity,
that is we assumed the following:
\begin{equation}
d\ln\left(\lambda_{t}^{1}+\lambda_{t}^{2}\right)=\mu dt+\sigma dW_{t}
\end{equation}
where $\mu$ and $\sigma$ are the drift and volatility of the process.
Table \ref{tab:ModelparamDriftVol} shows the results of the estimation
for multiple games. The mean and standard deviation of the drift terms
are $\mu=0.55\pm0.16\;1/90{\it min}$ while the mean and standard
deviation of the volatility terms are $\sigma=0.51\pm0.19\;1/\sqrt{90{\it min}}$.
The fact that implied goal intensities are increasing during the game
is consistent with findings of \citet{dixon1998birth} who found gradual
increase of scoring rates by analysing goal times of 4012 matches
between 1993 and 1996.

\begin{figure}[t]
\begin{centering}
\includegraphics[width=0.55\paperwidth]{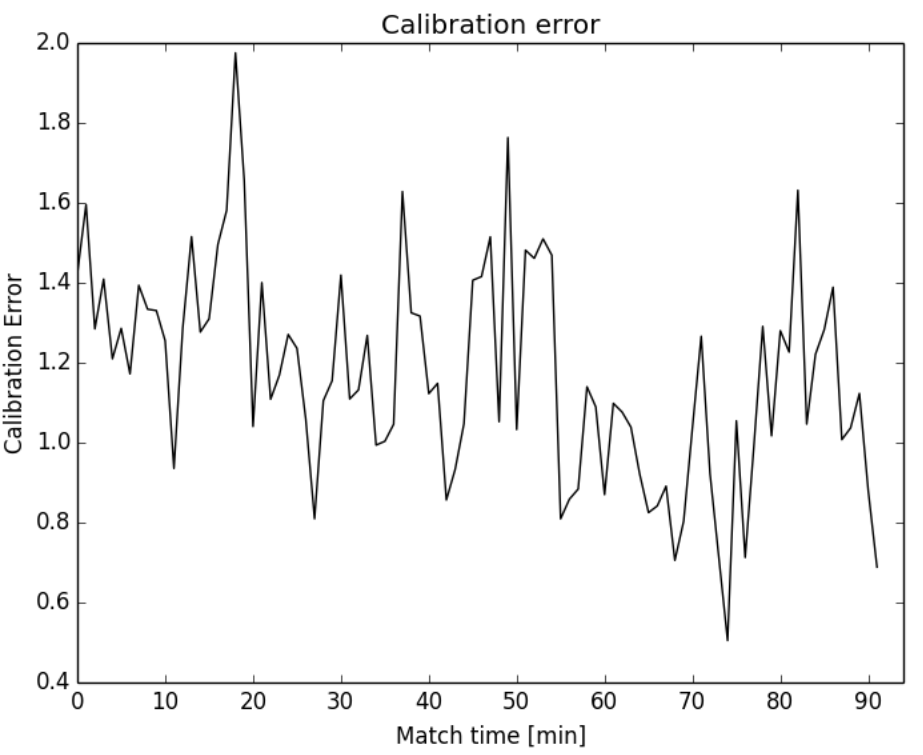}
\par\end{centering}
\protect\caption{\label{fig:residual}Calibration error during the game. Calibration
error is defined as the average distance of all 31 calibrated bet
values from the market mid prices in units of bid-ask spread. A formal
definition is given by Equation \ref{eq:calibration}. Note that the
calibration error for this particular game is usually between 1 and
2 bid-ask spreads which is a reasonably good result, given that the
model has only 2 free parameters to explain all 31 bet values.}
\end{figure}

\begin{figure}[t]
\begin{centering}
\includegraphics[width=0.55\paperwidth]{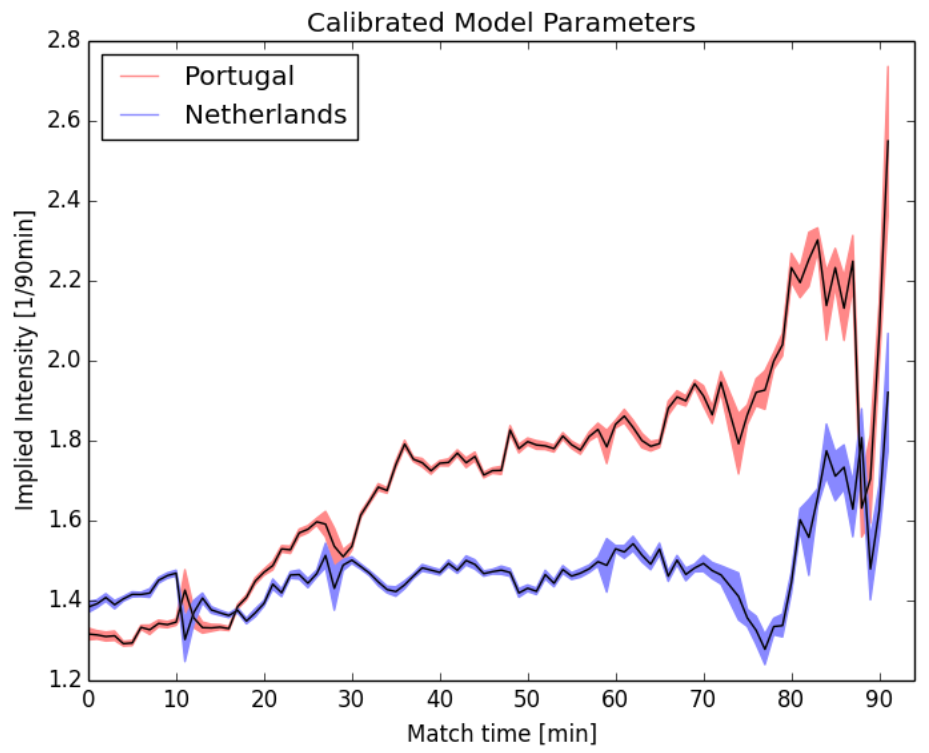}
\par\end{centering}
\protect\caption{\label{fig:Calibrated-intensities}Calibrated model parameters, also
referred to as implied intensities during the game. Formally, this
is equal to the minimizer $\lambda_{t}^{1},\lambda_{t}^{2}$ of Equation
\ref{eq:calibration}. The bands show the parameter uncertainties
estimated from the bid-ask spreads of the market values of the bets.
Note that the intensities appear to have an increasing trend and also
fluctuate over time.}
\end{figure}

\begin{table}[t]
\centering \begin{tabular}{|l|c|} \hline \textbf{Game} & \textbf{Calibration Error}\tabularnewline\hline Denmark v Germany & 1.65\tabularnewline\hline Portugal v Netherlands & 1.18\tabularnewline\hline Spain v Italy & 2.21\tabularnewline\hline Sweden v England & 1.58\tabularnewline\hline Italy v Croatia & 1.45\tabularnewline\hline Germany v Italy & 1.50\tabularnewline\hline Germany v Greece & 1.34\tabularnewline\hline Netherlands v Germany & 1.78\tabularnewline\hline Spain v Rep of Ireland & 1.64\tabularnewline\hline Spain v France & 1.40\tabularnewline\hhline{|=|=|} Average & 1.57\tabularnewline\hline Standard deviation & 0.27\tabularnewline\hline \end{tabular}  
\protect\caption{\label{tab:Calibration-errors}Average calibration errors in units of bid-ask spread as shown
in Figure \ref{fig:residual} have been calculated for multiple games
of the UEFA Euro 2012 Championship and are shown in this table. Note
that the mean of the averages is just 1.57 bid-ask spreads with a standard deviation
of 0.27 which shows that the model fit is reasonably good for
the games analysed.}
\end{table}

\begin{table}[t]
\centering \begin{tabular}{|l|c|c|} \hline \textbf{Game} & \textbf{Drift [$1/90\it{min}$]} & \textbf{Vol [$1/\sqrt{90\it{min}}$]}\tabularnewline\hline Denmark v Germany & 0.36 & 0.28\tabularnewline\hline Portugal v Netherlands & 0.49 & 0.44\tabularnewline\hline Spain v Italy & 0.60 & 0.76\tabularnewline\hline Sweden v England & 0.58 & 0.59\tabularnewline\hline Italy v Croatia & 0.82 & 0.60\tabularnewline\hline Germany v Italy & 0.76 & 0.39\tabularnewline\hline Germany v Greece & 0.65 & 0.66\tabularnewline\hline Netherlands v Germany & 0.43 & 0.32\tabularnewline\hline Spain v Rep of Ireland & 0.32 & 0.78\tabularnewline\hline Spain v France & 0.48 & 0.25\tabularnewline\hhline{|=|=|=|} Average & 0.55 & 0.51\tabularnewline\hline Standard deviation & 0.16 & 0.19\tabularnewline\hline \end{tabular} 
\protect\caption{\label{tab:ModelparamDriftVol}Average drift and volatility of total
log-intensities estimated for multiple games of the UEFA Euro 2012
Championship. Note that the drift term is positive for all games which
is consistent with the empirical observation of increasing goal frequencies
as the game progresses.}
\end{table}

\section{Hedging with Next Goal bets}\label{sec:nextgoal}

\begin{figure}[t]
\begin{centering}
\includegraphics[width=0.55\paperwidth]{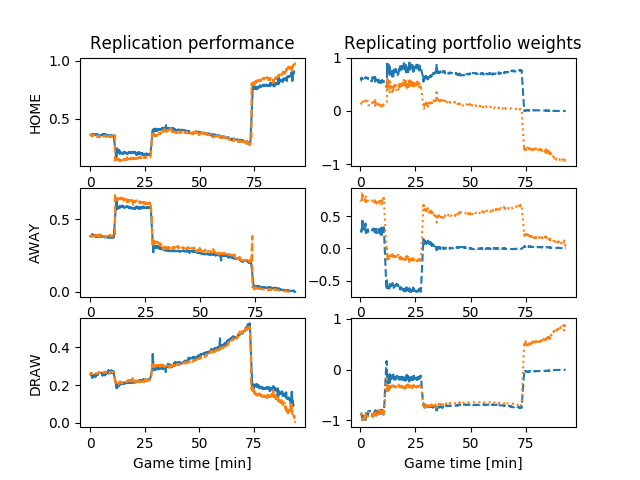}
\par\end{centering}
\protect\caption{\label{fig:ReplicationPerformance}Replicating the Match Odds home, away and draw contracts
using Next Goal home and away contracts as hedging instruments. The left column shows the replication performance
with the dashed line showing the value of the original Match Odds contracts and the continuous line showing the value of
the replicating portfolio. The right column shows the weights of the replicating portfolio with the dashed line
showing the weight of the Next Goal home contract and the dotted line showing the weight of the Next Goal away contract.}
\end{figure}

In this section we demonstrate market completeness and we show that
Next Goal bets are natural hedging instruments that can be used to
dynamically replicate and hedge other bets.

Recall that according to Proposition \ref{prop:replication-from-anything}
any European bet $X_{t}$ can be replicated by dynamically trading
in two linearly independent instruments $Z_{t}^{1}$ and $Z_{t}^{2}$:
\begin{equation}
X_{t}=X_{0}+\int_{0}^{t}\psi_{s}^{1}dZ_{s}^{1}+\int_{0}^{t}\psi_{s}^{2}dZ_{s}^{2}
\end{equation}
where the portfolio weights $\psi_{t}^{1},\psi_{t}^{2}$ are equal
to the solution of the equation
\begin{equation}
\left(\begin{array}{cc}
\delta_{1}Z_{t}^{1} & \delta_{1}Z_{t}^{2}\\
\delta_{2}Z_{t}^{1} & \delta_{2}Z_{t}^{2}
\end{array}\right)\left(\begin{array}{c}
\psi_{t}^{1}\\
\psi_{t}^{2}
\end{array}\right)=\left(\begin{array}{c}
\delta_{1}X_{t}\\
\delta_{2}X_{t}
\end{array}\right),\label{eq:replication_equation-1}
\end{equation}
where the values of the finite difference operators $\delta$ (Definition
\ref{def:deltas_def}) can be computed using Proposition \ref{prop:european_closedform}
using the calibrated model intensities. Equation \ref{eq:replication_equation-1}
tells us that the change in the replicating portfolio must match the
change of the bet value $X_{t}$ in case either team scores a goal.
This approach is analogous to delta hedging in the Black Scholes framework.

The two bets that we use as replicating instruments are the Next Goal home
and the Next Goal away bets. These bets settle during the game in a way such that
when the home team scores a goal the price of the Next Goal home bet jumps to 1 and the price of the Next Goal away bet jumps
to zero and vice versa for the away team. After the goal the bets reset and trade again at their regular market price. The values of the bets are:

\begin{eqnarray}
Z^{\it{NG_1}}_{t} & = & \frac{\lambda_{1}}{\lambda_{1}+\lambda_{2}}\left[1-e^{-\left(\lambda_{1}+\lambda_{2}\right)\left(T-t\right)}\right] \\
Z^{\it{NG_2}}_{t} & = & \frac{\lambda_{2}}{\lambda_{1}+\lambda_{2}}\left[1-e^{-\left(\lambda_{1}+\lambda_{2}\right)\left(T-t\right)}\right].
\end{eqnarray}

The matrix of deltas, that is the changes of contract values in case of a goal as defined in \ref{def:deltas_def} are:

\begin{equation}
\left(\begin{array}{cc}
\delta_{1}Z_{t}^{NG_1} & \delta_{1}Z_{t}^{NG_2}\\
\delta_{2}Z_{t}^{NG_1} & \delta_{2}Z_{t}^{NG_2}
\end{array}\right) =
\left(\begin{array}{cc}
1 - Z_{t}^{NG_1} & - Z_{t}^{NG_2}\\
- Z_{t}^{NG_1} & 1 - Z_{t}^{NG_2}
\end{array}\right)
\end{equation}

The reason for choosing Next Goal bets as hedging instruments is that these bets are linearly independent (see Definition \ref{def-linearindependence}), that is the delta matrix is non-singular even if there is a
large goal difference between the two teams. Note that this is an advantage compared to using the Match Odds bets as hedging instruments: in case one team leads by several goals,
it is almost certain that the team will win. In that case the value of the Match Odds bets goes close to 1 for the given team and 0 for the other team. An additional goal does not change the values
significantly, therefore the delta matrix becomes singular and the bets are not suitable for hedging because the portfolio weights go to infinity.
This is never the case with Next Goal bets which can therefore be used as natural hedging instruments.

We used the Portugal vs. Netherlands game from Section \ref{sec:examplegame} to replicate the values of the three Match Odds bets, using the Next Goal bets as
hedging instruments. Figure \ref{fig:ReplicationPerformance} shows the values of the original Match Odds bets along with the values of the replicating portfolios
(left column) and the replicating portfolio weights (right column).

Figure \ref{fig:ReplicationErrorNoGoals} shows the jumps of contract
values against the jumps of replicating portfolio values at times when a goal
was scored. This figure contains several different types of bets, that is not only Match Odds bets,
but also Over/Under and Correct Score bets. The figure also contains all 3 goals scored during the
Portugal vs. Netherlands game. It can be seen that the jumps of the original
contract values are in line with the jumps of the replicating portfolio
values with a correlation of 89\%. Table \ref{tab:ReplicationCorrelation}
shows these correlations for multiple games of the UEFA Euro 2012 Championship. It can be seen that the
correlations are reasonably high for all games with an average of
80\% and a standard deviation of 19\%.

\begin{figure}[t]
\begin{centering}
\includegraphics[width=0.55\paperwidth]{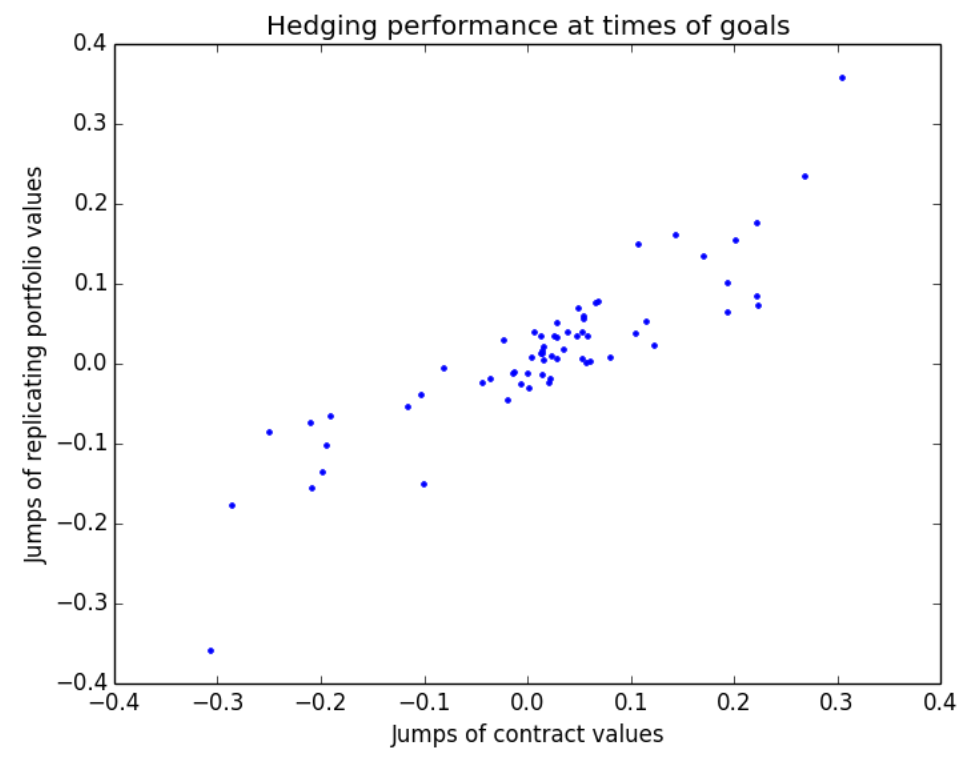}
\par\end{centering}
\protect\caption{\label{fig:ReplicationErrorNoGoals}Jumps of actual contract values
(horizontal axis) versus jumps of replicating
portfolio values (vertical axis) at times of goals scored during the Portugal vs. Netherlands game. The changes are
computed between the last traded price before a goal and the first
traded price after a goal, for all goals. The figure contains Match Odds, Over/Under and Correct Score bets.
Next Goal home and away bets were used as hedging instruments to build the replicating portfolios.
Note that the value changes of the replicating portfolios corresponds reasonably well to the value
changes of the original contracts with a correlation of 89\%.}
\end{figure}

\begin{table}[t]
\centering \begin{tabular}{|l|c|} \hline \textbf{Game} & \textbf{Correlation}\tabularnewline\hline Denmark vs. Germany & 79\%\tabularnewline\hline Portugal vs. Netherlands & 89\%\tabularnewline\hline Spain vs. Italy & 97\%\tabularnewline\hline Italy vs. Croatia & 47\%\tabularnewline\hline Spain vs. France & 86\%\tabularnewline\hline Germany vs. Italy & 99\%\tabularnewline\hline Germany vs. Greece & 60\%\tabularnewline\hline Netherlands vs. Germany & 93\%\tabularnewline\hline Spain vs. Rep of Ireland & 98\%\tabularnewline\hline Sweden vs. England & 50\%\tabularnewline\hhline{=|=} Average & 80\%\tabularnewline\hline Standard deviation & 19\%\tabularnewline\hline \end{tabular}
\protect\caption{\label{tab:ReplicationCorrelation}Correlation between the jumps of
bet values and jumps of replicating portfolios at times of goals for
all bets of a game.}
\end{table}

\section{Conclusions}\label{sec:conclusion}
In this paper we have shown that the Fundamental Theorems of Asset
Pricing apply to the market of in-play football bets if the scores
are assumed to follow independent Poisson processes of constant intensities.
We developed general formulae for pricing and replication. We have
shown that the model of only 2 parameters calibrates to 31 different
bets with an error of less than 2 bid-ask spreads. Furthermore, we
have shown that the model can also be used for replication and hedging.
Overall we obtained good agreement between actual contract values
and the values of the corresponding replicating portfolios, however
we point out that hedging errors can sometimes be significant due
to the fact the implied intensities are in practice not constant.


\section*{Funding}
PD and TA acknowledge support of the Economic and Social Research
Council (ESRC) in funding the Systemic Risk Centre (ES/K002309/1).
\bibliographystyle{rAMF}
\bibliography{references}

\appendix
\section{\label{sec:Valuation-of-Bets} Valuation formulae}
This section summarizes a list of analytical formulae for the values
of some of the most common in-play football bets. In the first sub-section
we consider European bets, while the second sub-section contains non-European
bets.

\subsection{European Bets}
The value of a European bet at the end of the game only depends on
the final scores. The formulae of this section follow directly from
Proposition \ref{prop:european_closedform}. Table \ref{tab:const_valudationformulae}
summarizes the payoff functions and the valuation formulae for some of the most common
types of European bets.

\begin{table}[t]
\centering
\begin{tabular}{|l|c|c|}
\hline 
\textbf{Bet type} & \textbf{Payoff $\Pi\left(N_{T}^{1},N_{T}^{2}\right)$} & \textbf{Value $X_{t}\left(N_{t}^{1},N_{t}^{2},\lambda_{1},\lambda_{2}\right)$}\tabularnewline
\hline 
\hline 
Match Odds Home & $\mathbf{1}\left(N_{T}^{1}>N_{T}^{2}\right)$ & $\sum_{k_{1}>k_{2}}\prod_{i=1}^{2}P\left(k_{i}-N_{t}^{i},\Lambda_{i}\right)$\tabularnewline
\hline 
Match Odds Away & $\mathbf{1}\left(N_{T}^{1}<N_{T}^{2}\right)$ & $\sum_{k_{1}<k_{2}}\prod_{i=1}^{2}P\left(k_{i}-N_{t}^{i},\Lambda_{i}\right)$\tabularnewline
\hline 
Match Odds Draw & $\mathbf{1}\left(N_{T}^{1}=N_{T}^{2}\right)$ & $\sum_{k_{1}=k_{2}}\prod_{i=1}^{2}P\left(k_{i}-N_{t}^{i},\Lambda_{i}\right)$\tabularnewline
\hline 
Arrow-Debreu $K_{1}$,$K_{2}$ & $\mathbf{1}\left(N_{T}^{1}=K_{1},N_{T}^{2}=K_{2}\right)$ & $\prod_{i=1}^{2}P\left(K_{i}-N_{t}^{i},\Lambda_{i}\right)$\tabularnewline
\hline 
Over $K$ & $\mathbf{1}\left(N_{T}^{1}+N_{T}^{2}>K\right)$ & $\sum_{k=K+1}^{\infty}P\left(k-N_{t}^{1}-N_{t}^{2},\left(\Lambda_{1}+\Lambda_{2}\right)\right)$\tabularnewline
\hline 
Under $K$ & $\mathbf{1}\left(N_{T}^{1}+N_{T}^{2}<K\right)$ & $\sum_{k=0}^{K-1}P\left(k-N_{t}^{1}-N_{t}^{2},\left(\Lambda_{1}+\Lambda_{2}\right)\right)$\tabularnewline
\hline 
Odd & $\mathbf{1}\left(N_{T}^{1}+N_{T}^{2}=1\mod2\right)$ & $\exp\left[-\left(\Lambda_{1}+\Lambda_{2}\right)\right]\cosh\left[\left(\Lambda_{1}+\Lambda_{2}\right)\right]$\tabularnewline
\hline 
Even & $\mathbf{1}\left(N_{T}^{1}+N_{T}^{2}=0\mod2\right)$ & $\exp\left[-\left(\Lambda_{1}+\Lambda_{2}\right)\right]\sinh\left[\left(\Lambda_{1}+\Lambda_{2}\right)\right]$\tabularnewline
\hline 
Winning Margin $K$ & $\mathbf{1}\left(N_{T}^{1}-N_{T}^{2}=K\right)$ & $\begin{array}{c}
\exp\left[-\left(\Lambda_{1}+\Lambda_{2}\right)\right]\left(\frac{\Lambda_{1}}{\Lambda_{2}}\right)^{\frac{K-N_{t}^{1}+N_{t}^{2}}{2}}\\
\cdot B_{\left|K-N_{t}^{1}+N_{t}^{2}\right|}\left(2\sqrt{\Lambda_{1}\Lambda_{2}}\right)
\end{array}$\tabularnewline
\hline 
\end{tabular}

\protect\caption{\label{tab:const_valudationformulae}Valuation formulae for some of the most common types of in-play football bets.
$\Pi\left(N_{T}^{1},N_{T}^{2}\right)$ denotes the payoff function,
that is the value of the European bet at the end of the game. $P\left(k,\Lambda\right)$
denotes the Poisson distribution, that is $P\left(k,\Lambda\right)=\frac{1}{k!}e^{-\Lambda}\Lambda^{k}$
and $\Lambda_{i}=\lambda_{i}\left(T-t\right)$ with $i\in\left\{ 1,2\right\}$ for the home and the away team, respectively.}

\end{table}

Match Odds Home, Away and Draw bets pay out depending on the final result of the game.
The Arrow-Debreu $K_{1}$,$K_{2}$ bets pay out if the final scores
are equal to $K_{1}$,$K_{2}$. Over $K$ and Under $K$ bets pay out if the total
number of goals is over or under $K$. Odd and Even bets pay out if the
total number of goals is an odd or an even number.

The Winning Margin $K$ bet wins if the difference between the home
and away scores is equal to $K$. The value of this bet follows the
Skellam distribution, $B_{k}\left(z\right)$ denotes the modified Bessel
function of the first kind.

\subsection{Non-European Bets}
Bets in this category have a value at the end of the game that depends
not only on the final score, but also on the score before the end
of the game or the order of scores. We consider two popular bets in
this category: Next Goal and Half Time / Full Time bets. Valuation
of these bets follows from Corollary \ref{prop:value_eq_expectedvalue}.

\subsubsection{Next Goal}
The Next Goal Home bet wins if the home team scores the next goal.
The value of this bet is

\begin{eqnarray}
X_{t} & = & \frac{\lambda_{1}}{\lambda_{1}+\lambda_{2}}\left[1-e^{-\left(\lambda_{1}+\lambda_{2}\right)\left(T-t\right)}\right].
\end{eqnarray}

Similarly, the value of the Next Goal Away bet is equal to
\begin{eqnarray}
X_{t} & = & \frac{\lambda_{2}}{\lambda_{1}+\lambda_{2}}\left[1-e^{-\left(\lambda_{1}+\lambda_{2}\right)\left(T-t\right)}\right].
\end{eqnarray}

\subsubsection{Half Time / Full Time}
Half Time / Full Time bets win if the half time and the full time
is won by the predicted team or is a draw. Given that there are 3
outcomes in each halves, there are 9 bets in this category. For example,
the value of the Half Time Home / Full Time Draw bet before the end
of the first half is equal to:

\begin{eqnarray}
X_{t} & =\sum_{k_{1}>k_{2}}\sum_{l_{1}=l_{2}} & P\left(k_{1}-N_{t}^{1},\lambda_{1}\left(T_{\frac{1}{2}}-t\right)\right)P\left(k_{2}-N_{t}^{2},\lambda_{2}\left(T_{\frac{1}{2}}-t\right)\right)\nonumber \\
 &  & \times P\left(l_{1}-k_{1},\lambda_{1}\left(T-T_{\frac{1}{2}}\right)\right)P\left(l_{2}-k_{2},\lambda_{2}\left(T-T_{\frac{1}{2}}\right)\right).
\end{eqnarray}

In the second half, this bet either becomes worthless if the first
half was not won by the home team or otherwise becomes equal to the
Draw bet.

\end{document}